\theoremstyle{plain}
\newtheorem{thm}{Theorem}[section]
\newtheorem{lemma}[thm]{Lemma}
\theoremstyle{remark}
\newtheorem{remark}{Remark}
\newtheorem{assump}{Assumption}
\newcommand{\be}{\begin{equation}}
\newcommand{\ee}{\end{equation}}
\newcommand{\dist}{{\, \rm dist}}
\newcommand{\norm}[1]{ \left \| #1 \right \|}
\def\sgn{{\rm{sgn}}}
\newcommand{\R}{\mathbb R}
\newcommand{\N}{\mathbb N}
\newcommand{\C}{\mathbb C}
\newcommand{\di}{\mathrm d}
\newcommand{\ep}{\qquad {\vrule height 7pt width 7pt depth 0pt}}
\begin{document}
\title[Hamiltonian--based quantum computation of low--rank
 matrices] {On the efficiency of Hamiltonian--based quantum
computation for low--rank matrices}

\author{Zhenwei Cao}
\author{Alexander Elgart}
\email{zhenwei@vt.edu}\email{aelgart@vt.edu}
\affiliation{Department
of Mathematics, Virginia Tech., Blacksburg, VA, 24061 }
\date{Revision date: 20 Jan 2012}
\begin{abstract}
We present an extension of Adiabatic Quantum Computing (AQC)
algorithm for the unstructured search to the case when  the  number
of marked items is unknown. The algorithm maintains the optimal
Grover speedup and includes a small  counting subroutine.

Our other results include a lower bound on the amount of time needed to perform a general Hamiltonian-based quantum search, a lower bound on the evolution time needed to perform a search that is valid in the presence of control error and a generic upper bound on the minimum eigenvalue gap for evolutions.

In  particular, we demonstrate that quantum speedup for the unstructured search
using AQC type algorithms may only be achieved under  very rigid control precision requirements.
\end{abstract}
%

\maketitle
%
%
%
%
%
%
%
\section{Introduction and main results}\label{sec:intro}
\subsection{Introduction}
Quantum computing is believed to possess more computational power
than classical computing on certain computational tasks.
For example,  factorization of large numbers can be feasible
once a quantum computer is built \cite{SHOR}. The basic paradigm which is usually used in the theoretical works on quantum computing is the so called  quantum circuit model (QCM), see e.g. \cite{Nielsen}, although the  practical realization of it is yet to be found. Farhi and his collaborators \cite{FGGS} had proposed the adiabatic quantum computing (AQC) as an alternative, constructive model for implementing a quantum computer. It was later realized that from a computational complexity point of view AQC is equivalent to all other models for universal quantum computation \cite{ADKS}.
\par
Grover's algorithm \cite{G}, originally derived in the framework of QCM, is one of the milestones in quantum computing. The problem it solves can be formulated as  following:
Given $F: \{0,1\}^n\rightarrow\{0,1\}$ and  knowledge that there exists a  unique element $x$ such that
$F(x)=1$, find $x$. It is clear that classically, one needs to check $F(x)$
for all $N=2^n$ values of $x$ to find the solution, so the time complexity
of doing so is $O(N)$. Grover's algorithm uses only $O(\sqrt{N})$ steps
to achieve the result, in the framework of QCM. This bound is
indeed proven to be optimal for query type of quantum algorithms, see
\cite{BBBV}. In the case where
there are $m$ (not necessarily $1$) values of $x$ for which $F(x)=1$,
a modified version of Grover's algorithm works in QCM if the number $m$
is known \cite{BBHT}. The technique for finding $m$  is called quantum counting and was developed for  QCM  in \cite{BHT}.
\par
The original motivation of introducing AQC was to  derive  the physically attainable  algorithm for solving optimization problems such as as satisfiability of Boolean formulas by encoding a cost function into the Hamiltonian.  One of the (very few) models for which AQC had been shown to produce a speedup is  Grover's search problem, addressed first in \cite{FG}. The further works in this direction (e.g. \cite{T, RC}) considered the original problem of Grover (that is the case $m=1$). The natural question that arises in this context is whether the Grover type running time, {\it e.g.} $O(\sqrt N)$, is still  optimal  for a more general class of problem Hamiltonian, characterized by $N\gg m>1$. That means that one wants to derive the analytic lower bound on the runtime of the algorithm,  as well as to construct the suitable realization of the algorithm  for which the running time (being  the upper bound on the optimal time) is comparable with this lower bound.   The partial result in this direction, namely the lower bound for some class of such models,  was established  in \cite{IM} (we postpone the more detailed discussion till the next section).

In this paper, we derive the analytic upper and lower bounds on the amount of time needed to perform a (Hamiltonian based but not necessary adiabatic) unstructured search for the case $N\gg m>1$. We also present a lower bound on the evolution time needed to perform a search that is valid in the presence of control error and a generic upper bound on the minimum eigenvalue gap for the family of the interpolating Hamiltonians used in AQC. In particular, we  show that in general the $O(\ln N/\sqrt{N})$ control precision is  necessary in order to achieve quantum speedup over classical computation for the small values of $m$.

\subsection{Bounds on the running time in Hamiltonian--based quantum
computation}
In the abstract setting of AQC, we are interested in finding the ground state
of the given problem Hamiltonian $H_F$, in the shortest possible time.
To this end, we consider a pair of hermitian $N\times N$ matrices $H_{I,F}$,
and will assume that $N\gg1$. Let $H(s)$ be the interpolating Hamiltonian
\begin{equation}\label{eq:H}
H(s):=(1-f(s))H_I+f(s)H_F\,,
\end{equation}
where $f$ is a monotone function on $[0,1]$ satisfying $f(0)=0$,
$f(1)=1$. The idea of AQC is to prepare the initial state of the system
$\psi(0)$ in a ground state $\psi_I$ of the
Hamiltonian $H_I$, and let the system evolve according to the (scaled)
Schr\"odinger equation:
\begin{equation}\label{eq:Sch}
i\dot\psi_\tau(s) \ = \ \tau H(s) \psi_\tau(s)\,,\quad \psi_\tau(0)\
= \ \psi_I\,.
\end{equation}
The adiabatic theorem (AT) of quantum mechanics ensures that under
certain conditions (see theorem \ref{thm:adi} below for details) the
evolution $\psi_\tau(1)$ of the initial state stays close to a
ground state of the problem Hamiltonian $H_F$. For AQC to be
efficient, the running ({\it i.e.} physical) time $\tau$ in
\eqref{eq:Sch} must be much smaller than $N$. One then can ask what
choice of the initial Hamiltonian $H_I$ and the parametrization
$f(s)$ minimizes $\tau$, and what the optimal value of $\tau$ is.

One of the parameters that enters into the upper bound for $\tau$ in
the standard AT is the minimal value $g$ of the spectral gap $g(s)$
between the ground state energy of $H(s)$ and the rest of its
spectrum. Consequently, the traditional approach \cite{FGGS} to AQC involves the
estimation of $g$. Excluding a very short list of interesting
situations for which $g$ can be explicitly evaluated (compilation of such examples can be found in
\cite{DMV}), it appears to be a hard problem. In some instances one
can get an idea of what size of $g$ could be by using the
first-order perturbation theory \cite{T}. In subsection \ref{subsec: Gaps} we present rigorous bounds on the size of the gap for the problem at hand, albeit  we don't use them explicitly in our study of AQC.

The main purpose of this work is to obtain the rigorous upper and lower
bounds on the optimal running time $\tau$ for a particular class of
problem Hamiltonians, satisfying
\begin{assump}\label{assump'}
The problem Hamiltonian is of the small rank: \[Rank(H_F):=m\ \ll
N\,.\]
\end{assump}
This hypothesis is fulfilled in particular for the generalized
unstructured search (GUS) problem,  see {\it e.g.} \cite{BHT}. Since we are interested in the dynamical evolution of the initial state for which shifting the energy results in the overall dynamical phase factor, the above assumption is equivalent to the following condition: Let $V$ denote the largest eigenspace of $H_F$. Then we require that $N-\dim V\ll N$.

It turns out that for such $H_F$ one can circumvent the standard AT, avoiding
the direct estimation of $g$. We will also see that the (nearly)
optimal parametrization $f(s)$ is in fact non adiabatic.

To formulate our results, we need to introduce some notation first:
Let $\{E_n^i\}_{n=1}^N$ ($\{E_n^f\}_{n=1}^N$) be a set of distinct
eigenvalues of $H_I$ (respectively $H_F$), enumerated in the
ascending order. It is allowed to the corresponding eigenvalues to
be degenerate. In what follows, we will denote by $P_I$ ($P_F$) the
eigenprojection of $H_I$ ($H_F$) onto $E_I:=E_1^i$ ($E_F:=E_1^f$),
and by $Q_I$ ($Q_F$) the orthogonal projection onto the range of
$H_I$ ($H_F$). To AQC to be meanigful in our context we have to impose $E_I\neq0$. In the typical setup, $E_I=-1$.

Before stating our results, let us note that for AQC to work, it
suffices to ensure that $\psi_\tau(1)$ has just the non trivial
overlap with the range of $P_F$, which we will encode in the
requirement $\|P_F\psi_\tau(1)\|\ge\gamma$ for a "reasonable"
$\gamma$. Indeed, like many quantum algorithms, the AQC algorithm is
probabilistic in the sense that it gives the correct answer with the
probability $\gamma^2$. The probability of failure can be decreased
to the desired value (namely $O(1/N)$) by repeating the algorithm
$\frac{\ln N}{\gamma^2}$ times. We set $\gamma=1/5$ throughout this
paper. Another issue that we want to settle is normalization of
$H(s)$. To that end, we will calibrate $H_{I,F}$ as
$\|H_I\|=\|H_F\|=1$. Note that without loss of generality we can
assume that $E_F<0$ (since otherwise we can interpolate $-H_I$ and
$-H_F$ which  only changes the solution $\psi_\tau$ of
\eqref{eq:Sch} into $\bar \psi_\tau$). We now introduce some
parameters in order to formulate our results. Namely, let $\delta_1
=\| H_F\psi_I\|$, let $\delta_2 =\| P_F\psi_I\|$, and let $\delta_3
=\| Q_F\psi_I\|$, where $Q_F$ is a projection onto $Range\, H_F$.
Let $g_F:=E_2^f-E_1^f$.

Finally we introduce the notion of what we will refer to as a generic Hamiltonian $H_I$. Given an $m$--dimensional subspace $V$ of $C^N$, the natural question one can ask is what is a distance from the "typical" vector $\psi_I$ to $V$. More specifically, suppose one has some reasonable probability distribution function  for the vectors $\psi_I$ on the unit sphere $S^N$ in $C^N$ (say uniform). Then the expected value of $\|\phi_I\|^2$ of the orthogonal projection $\phi_I$ of the $\psi_I$ on $V$ is equal to $m/N$. One can  check that the probability of the event  $\{\psi_I\in \C^N:\ |\|\phi_I\|^2-m/N|\ge \alpha m/N\}$ is exponentially small in  $\alpha$ (see {\it e.g.} \cite{DG}). Note now that $Q_F\psi_I$ is the projection of $\psi_I$ onto the range of $Q_F$, which is an $m$--dimensional subspace.  We therefore will call  $H_I$ {\it generic} if its ground state $\psi_I$ satisfies $\|Q_F\psi_I\|=O(\sqrt{m/N})$.

Our first assertion is the non-existence result, showing that for any choice of $H_I$ and any function $f(s)$ the running time cannot be smaller than $\tau_-$ defined below.
\begin{thm}[The lower bound on the running time]\label{thm:at1}
Consider the interpolating family Eq.~\eqref{eq:H} with an arbitrary
 $f$. Then the running time $\tau_-$ in Eq.~\eqref{eq:Sch} for
which $\|P_F\psi_{\tau_-}(1)\|\ge1/5$ satisfies
\begin{equation}\label{eq:mintime}
\tau_-\ \ge \ \frac{1-5\| P_F\psi_I\|}{5\| H_F\psi_I\|}\,,\quad \mbox{ for
}\quad\delta_2<1/5\,.
\end{equation}
\end{thm}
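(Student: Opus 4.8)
The plan is to estimate how fast the quantity $\|P_F\psi_\tau(s)\|$ can grow along the evolution. Since $\psi_\tau(0)=\psi_I$ and $\|P_F\psi_I\|=\delta_2<1/5$, while we need $\|P_F\psi_\tau(1)\|\ge 1/5$, the total increase $\|P_F\psi_\tau(1)\|-\|P_F\psi_I\|$ must be at least $1/5-\delta_2$. If I can bound the time derivative of $\|P_F\psi_\tau(s)\|$ by something proportional to $\tau$ times $\|H_F\psi_I\|$, then integrating over $s\in[0,1]$ gives the required lower bound on $\tau$.

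First I would write down the equation for $P_F\psi_\tau(s)$. From \eqref{eq:Sch}, $\frac{d}{ds}(P_F\psi_\tau) = -\im\tau P_F H(s)\psi_\tau = -\im\tau P_F\big((1-f(s))H_I + f(s)H_F\big)\psi_\tau$. The key observation is that $P_F H_F = E_F P_F$ (since $P_F$ projects onto the $E_F$-eigenspace of $H_F$), so $P_F H_F\psi_\tau = E_F P_F\psi_\tau$, and this term only rotates $P_F\psi_\tau$ by a phase — it does not change $\|P_F\psi_\tau\|$. Hence
\begin{equation}
\frac{d}{ds}\|P_F\psi_\tau(s)\|^2 = 2\,\Im\ip{P_F\psi_\tau,\ \tau(1-f(s))H_I\psi_\tau + \tau f(s)E_F P_F\psi_\tau},
\end{equation}
and the second term is real (times $\im$) hence contributes nothing, leaving $\frac{d}{ds}\|P_F\psi_\tau\|^2 = 2\tau(1-f(s))\,\Im\ip{P_F\psi_\tau, H_I\psi_\tau}$. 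This is bounded by $2\tau\|P_F\psi_\tau\|\,\|H_I\psi_\tau\|$. The remaining issue is to control $\|H_I\psi_\tau(s)\|$; this is where I would want to use that $\psi_I$ is a ground state of $H_I$, so that $H_I$ acting on the evolved state can be related back to $\|H_F\psi_I\|$. More precisely, I expect to track $H_I\psi_\tau(s)$ via its own differential equation: since $H_I$ commutes with $H_I$ but not $H_F$, $\frac{d}{ds}(H_I\psi_\tau) = -\im\tau H(s)H_I\psi_\tau - \im\tau f(s)[H_I,H_F]\psi_\tau$, which suggests looking instead at $\|(H_I-E_I)\psi_\tau\|$ or working with a cleverly chosen combination.

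Actually, the cleaner route — and the one I would pursue — is to replace $H_I$ by $H(s)$ itself in the estimate. Note that on the ground-state-overlap quantity, what matters is $P_F H(s)\psi_\tau$; writing $H_I = \frac{1}{1-f(s)}\big(H(s)-f(s)H_F\big)$ we get $(1-f(s))P_F H_I\psi_\tau = P_F H(s)\psi_\tau - f(s)E_F P_F\psi_\tau$, so $\frac{d}{ds}\|P_F\psi_\tau\|^2 = 2\tau\,\Im\ip{P_F\psi_\tau, H(s)\psi_\tau}$ and thus $\big|\frac{d}{ds}\|P_F\psi_\tau\|\big| \le \tau\|H(s)\psi_\tau\|$. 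Now $H(s)\psi_\tau$ evolves almost unitarily: $\frac{d}{ds}\big(H(s)\psi_\tau\big) = \dot f(s)(H_F-H_I)\psi_\tau - \im\tau H(s)\big(H(s)\psi_\tau\big)$, so $\frac{d}{ds}\|H(s)\psi_\tau\| \le |\dot f(s)|\,\|(H_F-H_I)\psi_\tau\| \le 2|\dot f(s)|$ using $\|H_{I,F}\|=1$. Hmm — that only gives $\|H(s)\psi_\tau(s)\|\le \|H_I\psi_I\|+2$, which is too crude since $\|H_I\psi_I\|=\|E_I\psi_I\|=|E_I|$ could be order one.

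So the genuine obstacle is getting a bound on $\|H_F\psi_\tau(s)\|$ (equivalently the relevant driving term) in terms of $\|H_F\psi_I\|=\delta_1$ rather than in terms of $\|H_F\|=1$; a naive triangle inequality loses the smallness. The fix I anticipate: work directly with $H_F\psi_\tau$ and bound $\frac{d}{ds}\|H_F\psi_\tau\| = \|\,\Im\ip{\cdots}\,\|/\|H_F\psi_\tau\| \le \tau\|[H_F,H_I]\psi_\tau\|\,|1-f(s)|$, but $[H_F,H_I]$ is again order one — this does not obviously close either. The resolution is to observe that for computing $P_F\psi_\tau$ only the component $P_F H_I\psi_\tau$ enters, and $\ip{P_F\psi_\tau,H_I\psi_\tau}$ can be rewritten, using self-adjointness and the $H_I$-evolution, so that the $H_I$-action is thrown onto $P_F\psi_\tau$, bringing in a factor that integrates against $\dot f$. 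Concretely I would prove, by a Gronwall-type argument, that $\|P_F\psi_\tau(s)\| \le \delta_2 + \tau\int_0^s \|H_F\psi_\tau(s')\|\,ds'$ and separately $\|H_F\psi_\tau(s)\| \le \delta_1 e^{C\tau \cdot 0}$ — i.e.\ that the driving term stays comparable to $\delta_1$ — which together give $1/5 \le \|P_F\psi_\tau(1)\| \le \delta_2 + \tau\,\delta_1\cdot(\text{const})$, hence $\tau \ge (1/5-\delta_2)/(5\delta_1)$ after fixing the constant to match $5$. I expect the bookkeeping that keeps the driving term pinned at size $\delta_1$ (rather than size $1$) to be the heart of the proof, and the rest — the Gronwall estimate and the arithmetic producing the exact constants $1/5$ and $5$ in \eqref{eq:mintime} — to be routine.
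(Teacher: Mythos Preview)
Your approach has a genuine gap at exactly the point you flag as ``the heart of the proof'': you cannot keep the driving term pinned at size $\delta_1$. The claim that $\|H_F\psi_\tau(s)\|$ stays comparable to $\delta_1$ is false --- by hypothesis $\|P_F\psi_\tau(1)\|\ge 1/5$, so $\|H_F\psi_\tau(1)\|\ge |E_F|/5$, which is order one. More generally, any Gronwall scheme that feeds $\psi_\tau(s)$ back into the bound (via $H_I\psi_\tau$, $H_F\psi_\tau$, $H(s)\psi_\tau$, or $[H_I,H_F]\psi_\tau$) loses the smallness, because all of these quantities are allowed to become order one along the evolution; that is precisely what a successful search does.

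The paper avoids this feedback entirely. The missing idea is to compare $\psi_\tau(s)$ not to its own projection $P_F\psi_\tau$ but to the \emph{fixed} vector $\psi_I$, after stripping off the dynamical phase. Set $\phi_\tau(s)=e^{i\tau E_I\int_0^s(1-f)}\psi_\tau(s)$; then $\phi_\tau$ solves $i\dot\phi_\tau=\tau\hat H(s)\phi_\tau$ with $\hat H(s)=(1-f(s))(H_I-E_I)+f(s)H_F$, and the crucial observation is that $\hat H(s)\psi_I=f(s)H_F\psi_I$ has norm $\le\delta_1$ \emph{independently of the evolved state}. Duhamel (equivalently, variation of constants against the trivial evolution that fixes $\psi_I$) then gives directly
\[
\|\phi_\tau(1)-\psi_I\|\ \le\ \tau\int_0^1\|\hat H(s)\psi_I\|\,ds\ \le\ \tau\delta_1,
\]
with no Gronwall loop. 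The triangle inequality yields $\|P_F\psi_\tau(1)\|=\|P_F\phi_\tau(1)\|\le\delta_2+\tau\delta_1$, and combining with $\|P_F\psi_\tau(1)\|\ge 1/5$ gives \eqref{eq:mintime}. So the bookkeeping you anticipated being ``routine'' is in fact the whole proof once you choose the right reference state; the part you expected to be the heart --- controlling $\|H_F\psi_\tau(s)\|$ --- is both impossible and unnecessary.
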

\begin{remark}
\begin{enumerate}
\item[]
\item This result shows that it is impossible to construct the family of the interpolating Hamiltonians $H(s)$ such that the evolution of $\Psi_I$ will have a reasonable overlap with $P_F$ if the running time $\tau$ is smaller than $\tau_-$. To probe how tight this bound is, one wants to construct a specific family $H(s)$ and the running time $\tau_+$ for which $\|P_F \psi_\tau(1)\|$ is not small, and make a comparison between $\tau_\pm$. We construct such $H(s)$ in the next assertion.   As we shall see, our bound $\tau_-$ is not tight ($\tau_-/\tau_+ -1\neq o(1)$), but of the right order of magnitude (meaning $\tau_-/\tau_+= O(1)$) in terms of the asymptotic dependence on the small parameter $m/N$.
\item For a generic $H_I$ both $\| H_F\psi_I\|$ and $\| P_F\psi_I\|$ are
$O(\sqrt{m/N})$, hence the minimal running time $\tau$ cannot be
smaller than $O(\sqrt{N/m})$.
\item As we will see, the (nearly) optimizing parametrization $f(s)$
is in fact non adiabatic.
\end{enumerate}

\end{remark}

\vspace{.5cm}

\noindent {\bf Comparison with the Ioannou - Mosca result}.

\vspace{.5cm}

In \cite{IM}, Ioannou and Mosca established the lower bound on the running time $\tau$ for a particular class of problems where the initial Hamiltonian $H_I$ is diagonal in the Hadamard basis while the problem Hamiltonian $H_F$ is diagonal in the standard basis. Their result is non trivial provided the largest eigenspace  of $H_I$ has dimension $N-m$ where $m\ll \sqrt N$, and the lower bound they obtained is given by $\tau_-=O(\sqrt N/m)$. Since one can always interchange the roles of $H_I$ and $H_F$ and shift energy so that the largest eigenspace corresponds to the energy $0$, their result  can be viewed as a slighter
weaker version of Theorem \ref{thm:at1} for this class of Hamiltonians.

\vspace{.5cm}
\noindent
In the next assertion we construct a specific family $H(s)$ and determine the runtime $\tau_+$ for which $\|P_F \psi_\tau(1)\|\ge 1/5$:
\begin{thm}[The upper bound on the running time]\label{thm:at2}
There exists an explicit
rank one  $H_I$ and an explicit function $f$ such that
$\|P_F\psi_{\tau_+}(1)\|\ \ge \ 1/5$ for
\begin{equation}\label{eq:mintime'}
\tau_+\ = \ \frac{C(1-E_F)}{|E_F|\,\| P_F\psi_I\|}\,,
\end{equation}
and any value $C\in[1/3,2/3]$, provided  that $\| Q_F\psi_I\|/g_F=O(1/\ln N)$.
\end{thm}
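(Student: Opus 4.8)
The plan is to take $H_I=-|\psi_I\rangle\langle\psi_I|$ (rank one, $\|H_I\|=1$, $E_I=-1$, with $\psi_I$ its unique ground state; for a generic $\psi_I$ one has $\delta_2,\delta_3=O(\sqrt{m/N})$, whence the resulting $\tau_+=O(\sqrt{N/m})$) together with an explicit schedule $f$ chosen so that the evolution becomes exactly solvable on an effective qubit. First I would note that the cyclic subspace $\mathcal K=\overline{\mathrm{span}}\{H_F^k\psi_I:k\ge0\}$ is invariant under $H(s)$ for every $s$, since both $|\psi_I\rangle\langle\psi_I|$ and $H_F$ preserve it; hence $\psi_\tau(s)\in\mathcal K$ for all $s$, and $P_F|_{\mathcal K}=|e_1\rangle\langle e_1|$ with $e_1:=P_F\psi_I/\|P_F\psi_I\|$. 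Inside $\mathcal K$ put $b:=\delta_2=\|P_F\psi_I\|$, $a:=\sqrt{1-b^2}$, $\omega:=(1-P_F)\psi_I/a$, and $\Pi:=\mathrm{span}\{e_1,\omega\}$, which contains $\psi_I$. Then $H_I$ preserves $\Pi$ trivially, while the leakage operator $W(s):=(\1-P_\Pi)H(s)P_\Pi=f(s)(\1-P_\Pi)H_FP_\Pi$ obeys $\|W(s)\|\le f(s)\|H_F\omega-\langle\omega,H_F\omega\rangle\omega\|=f(s)\sqrt{\delta_1^2-E_F^2b^2}/a\le f(s)\,\|Q_F\psi_I\|/a$, using $H_Fe_1=E_Fe_1\in\Pi$ and $\langle e_1,H_F\omega\rangle=0$.

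Next I would treat the two-level model obtained by discarding $W$. In the orthonormal basis $\{e_1,\omega\}$, $\widetilde H_I$ is rank one with diagonal entries $-b^2,-a^2$ and off-diagonal entry $-ab$, while $\widetilde H_F=\mathrm{diag}(E_F,\epsilon)$ is diagonal (because $e_1,\omega$ are $H_F$-eigenvectors modulo leakage), with $\epsilon=\langle\omega,H_F\omega\rangle=O(\|Q_F\psi_I\|^2)$. Writing $\widetilde H(s)=c(s)\1+\vec r(s)\cdot\vec\sigma$, the traceless parts of $\widetilde H_I$ and $\widetilde H_F$ both lie in $\mathrm{span}\{\sigma_x,\sigma_z\}$, so $\vec r(s)$ traces the straight segment in the $xz$-plane from $\vec r_I=(-ab,0,(a^2-b^2)/2)$ to $\vec r_F=(0,0,(E_F-\epsilon)/2)$; this segment misses the origin, with $\min_s|\vec r(s)|\asymp b|E_F|/(1-E_F)$. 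I would fix $f$ by requiring the polar angle $\vartheta(s)$ of $\vec r(s)$ to advance at the rate $\dot\vartheta=2\kappa|\vec r(s)|$; read as an ODE for $f$ with $f(0)=0$ this determines $f$ once the constant $\kappa=\tfrac12\int_{\vartheta_I}^{\vartheta_F}|\vec r|^{-1}\,d\vartheta$ is chosen to enforce $f(1)=1$, and monotonicity of $\vartheta$ along the segment makes $f:[0,1]\to[0,1]$ increasing. In the frame $\psi=e^{-\im\vartheta(s)\sigma_y/2}\chi$ the angular dependence drops out: $\im\dot\chi=|\vec r(s)|\,(\tau\sigma_z-\kappa\sigma_y)\,\chi$, whose generator has a fixed direction, so $\chi(1)=\exp\!\bigl(-\im\Theta(\tau\sigma_z-\kappa\sigma_y)\bigr)\chi(0)$ with $\Theta=\int_0^1|\vec r(s)|\,ds=(\vartheta_F-\vartheta_I)/(2\kappa)$ — a single fixed-axis qubit rotation. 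Undoing the frame at $s=1$, the model overlap $|\langle e_1,\psi^{\mathrm{mod}}_\tau(1)\rangle|$ is an explicit trigonometric function of $\tau$; since $\kappa\asymp(1-E_F)/(|E_F|\delta_2)$, a calculation with the constants shows it stays $\ge\tfrac15+c_0$ for a fixed margin $c_0>0$ precisely for $\tau=C(1-E_F)/(|E_F|\delta_2)$ with $C\in[1/3,2/3]$ (the width of that window is exactly what yields the admissible range of $C$). Note that the ratio of the eigenbasis rotation rate $\kappa|\vec r|$ to the gap $2\tau|\vec r|$ equals $\kappa/(2\tau)=O(1)$ along the whole evolution, so this near-optimal schedule is genuinely non-adiabatic.

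Finally I would compare $\psi_\tau(1)$ with $\psi^{\mathrm{mod}}_\tau(1)$, the solution of the two-level model regarded as a vector in $\mathcal K$, by Duhamel's formula. Since $P_F|_{\mathcal K}=|e_1\rangle\langle e_1|$ and $e_1\in\Pi$, any amplitude that leaks out of $\Pi$ falls outside $P_F$, so it suffices to show $\|\psi_{\tau_+}(1)-\psi^{\mathrm{mod}}_{\tau_+}(1)\|\le c_0$. Here $\|W(s)\|\lesssim\|Q_F\psi_I\|$, while on $\Pi^\perp\cap\mathcal K$ the generator $H(s)$ equals $f(s)$ times the compression of $H_F$ to that subspace, whose spectrum lies at distance $\ge g_F$ from $E_F$; estimating the Duhamel term with these two facts and the hypothesis $\|Q_F\psi_I\|/g_F=O(1/\ln N)$ — which, given $\tau_+=O(\sqrt{N/m})$ and $\delta_3=O(\sqrt{m/N})$, renders both the residual mixing with the $\lambda_2$-branch near $s=1$ and the Landau--Zener transitions at the interior crossings of the tracked branch $c(s)-|\vec r(s)|$ with the bystander branches $f(s)\mu$ (each $\mu$ at distance $\ge g_F$ from $E_F$) of size $o(1)$ — delivers the bound, after which $\|P_F\psi_{\tau_+}(1)\|\ge|\langle e_1,\psi^{\mathrm{mod}}_{\tau_+}(1)\rangle|-c_0\ge 1/5$. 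I expect this leakage estimate to be the main obstacle: it is not a clean off-resonance argument, since the tracked energy can coincide with bystander energies at interior times and $g_F$ can be comparable to the relevant gap near $s=1$, so one must play off the smallness of $\|Q_F\psi_I\|$ (which makes those avoided crossings narrow, hence traversed diabatically) against the $g_F$-separation, and it is in balancing these two effects that the $1/\ln N$ tolerance is consumed.
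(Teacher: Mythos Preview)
Your setup (the rank-one $H_I=-|\psi_I\rangle\langle\psi_I|$, the invariant cyclic space, and the fact that $P_F$ acts as $|e_1\rangle\langle e_1|$ there) matches the paper, and your two-level computation is internally consistent: with the schedule $\dot\vartheta=2\kappa|\vec r|$ the rotating-frame generator does have fixed direction, and one obtains $|\langle e_1,\psi^{\mathrm{mod}}_\tau(1)\rangle|^2=1-\frac{\kappa^2}{\tau^2+\kappa^2}\sin^2\phi$, which indeed exceeds $(1/5)^2$ with room to spare when $\tau/\kappa\in[1/3,2/3]$ and $\kappa\asymp(1-E_F)/(|E_F|\delta_2)$. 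So the model part is fine.

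The route, however, is genuinely different from the paper's. The paper does \emph{not} reduce to a two-level system or choose a local-adiabatic-type schedule. It takes the double-step function $f(s)=\alpha:=1/(1-E_F)$ on $(0,1)$, so that $\psi_\tau(1)=e^{-i\alpha\tau(E_FP_I+H_F)}\psi_I$ is a single exponential of a \emph{fixed} $(m{+}1)\times(m{+}1)$ matrix. It then splits that matrix as $K_1+K_2$ with $K_1=\mathrm{diag}(E_F,\dots,E_m^f,E_F)$ and $\|K_2\|\le\delta_3$, and expands $e^{-i\alpha\tau K}$ via Duhamel to second order. The crucial point is that the degeneracy of $K_1$ on $\mathrm{span}\{e_1,e_{m+1}\}$ makes the first-order term equal to $\alpha\tau|E_F|\delta_2$ exactly, while the second-order remainder is controlled by a stationary-phase/commutator estimate (their Lemma~3.1), which is where the factor $\delta_3/g_F$ enters. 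No two-level truncation is ever needed, and no leakage has to be estimated separately.

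That comparison exposes the genuine gap in your plan. Your margin $c_0$ must absorb $\|\psi_{\tau_+}(1)-\psi^{\mathrm{mod}}_{\tau_+}(1)\|$, but the naive Duhamel bound for that difference is $\tau_+\!\int_0^1\|W(s)\|\,ds\lesssim \tau_+\delta_3\sim\delta_3/\delta_2\ge 1$, which is useless. You acknowledge this and propose a Landau--Zener/off-resonance mechanism, but that argument is not carried out, and it is not clear it can be made to yield $o(1)$ with only $\delta_3/g_F=O(1/\ln N)$: your schedule is time-dependent (so the leakage integral is not a single oscillatory integral as in the paper's Lemma~3.1), the traversal rates $\dot f$ at the interior crossings are dictated by the two-level geometry rather than by the bystander gaps, and there can be up to $m-1$ such crossings whose contributions accumulate. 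In the paper's approach this whole difficulty simply does not arise because $f$ is constant on $(0,1)$; the ``leakage'' is replaced by the second-order Duhamel term, for which the commutator trick gives the clean bound $8\delta_3/(\tau g_F)+2\delta_3^2/g_F$. If you want to salvage your route, the cleanest fix is to keep your $\Pi$ but adopt the step-function $f$; then the evolution on $\Pi\oplus\Pi^\perp$ is again a single exponential, and an analogue of Lemma~3.1 (with $P_{m+1}$ replaced by $P_\Pi$) will control the $\Pi\!\to\!\Pi^\perp$ coupling by $O(\delta_3/g_F)$.
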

\begin{remark}
\begin{enumerate}
\item[]
\item
For a generic choice of $H_I$ one has $\| Q_F\psi_I\|=O(\sqrt {m/N})$,
$\| P_F\psi_I\|=O(\sqrt {m_1/N})$, where $m_1=Rank\,P_F $.  It
implies that $\tau_-/\tau_+=O(1)$  for $m=O(1)$.
\item
This assertion can be viewed as an extension of the
result obtained in \cite{FG} that considered the original Grover's search problem in the Hamiltonian--based algorithm.
\item The interpolating function $f$ in this construction is similar to the one used in \cite{FG}, namely it is a double step function. Since $f$ is discontinuous, we prefer to refer to this particular construction as  the  Hamiltonian--based algorithm rather than AQC.
\end{enumerate}
\end{remark}
Note that a-priori the values of $E_F$ and $\delta_2$ may be
unknown.  For instance, the value of the overlap $\delta_2$ has to
be determined in the GUS problem with the {\it unknown} number of
marked items. To this end, we prove the following auxiliary result:
\begin{thm}\label{thm:delta}
Suppose that the value of $E_F$ is known. Then there is a
Hamiltonian -- based algorithm that determines $\| P_F\psi_I\|$ with $1/
N^2$ accuracy and requires $O((\ln N)^2)$ of the running time.
\end{thm}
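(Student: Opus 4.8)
\emph{Strategy.} The knowledge of $E_F$ makes $\delta_2$ nearly directly observable: in the (essentially two--dimensional) effective dynamics below, evolving $\psi_I$ under $H_F$ for the time $\pi/|E_F|$ returns a state whose amplitude along $\psi_I$ equals $1-2\delta_2^{2}$. Since reading $\delta_2$ off this by naive repetition would cost $\mathrm{poly}(N)$ of running time, the plan is to amplify the dependence on $\delta_2$ and run a Hamiltonian--based version of the quantum counting algorithm of~\cite{BHT}, with the reflections it needs synthesized from short Schr\"odinger evolutions.

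\emph{Step 1 (two--level reduction).} Take for $H_I$ the rank--one Hamiltonian $H_I=E_I\,|\psi_I\rangle\langle\psi_I|$ of Theorem~\ref{thm:at2} and set $\hat\phi:=P_F\psi_I/\delta_2$, so $\langle\psi_I,\hat\phi\rangle=\delta_2$. The plane $\V:=\mathrm{span}\{\psi_I,\hat\phi\}$ contains $\psi_I$, is invariant under $H_I$, and is invariant under $H_F$ up to the part of $\psi_I$ in the range of $Q_F-P_F$, of norm $(\delta_3^{2}-\delta_2^{2})^{1/2}$; on $\V$ itself $H_F$ restricts to $E_F\,|\hat\phi\rangle\langle\hat\phi|$. (In the GUS problem $Q_F=P_F$, and the reduction is exact.) Inside $\V$, knowing $\delta_2$ amounts to knowing the angle between $\psi_I$ and $\hat\phi$.

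\emph{Step 2 (counting).} Since $E_I,E_F$ are known and nonzero, evolving under $H_I$ for time $\pi/|E_I|$ implements $\1-2|\psi_I\rangle\langle\psi_I|$, and evolving under $H_F$ for time $\pi/|E_F|$ implements, on $\V$, the reflection $\1-2P_F$; their product $G$ acts on $\V$ as a rotation by $2\theta$ with $\sin\theta=\delta_2$, at $O(1)$ running--time cost per application. I would then run the (iterative) phase--estimation routine of~\cite{BHT} on $G$, with its controlled powers obtained from the usual ancilla--assisted controlled evolutions and the input/readout state $\psi_I$ prepared and measured as the ground state of $H_I$. The counting analysis---$O(\ln N)$ refinement rounds, each repeated $O(\ln N)$ times to push the failure probability below $1/N^{2}$---then returns $\theta$, hence $\delta_2$, to $1/N^{2}$ accuracy within $O((\ln N)^{2})$ of running time.

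\emph{Main obstacle.} The real work is the error analysis of Step 1 for a general $H_F$ of small rank; it is vacuous in the GUS case $Q_F=P_F$. Because the exact $H_F$ is not $E_FP_F$, each application of $G$ leaks out of $\V$ an amount controlled by $\delta_3$ and by the gap $g_F$, and this leakage is amplified by the repeated Grover steps of the counting. One has to show that under a hypothesis of the type $\|Q_F\psi_I\|/g_F=O(1/\ln N)$ appearing in Theorem~\ref{thm:at2}, this accumulated leakage---together with the statistical error of the phase--estimation measurements---stays comfortably below the target accuracy $1/N^{2}$, so that the idealized two--level counting analysis applies.
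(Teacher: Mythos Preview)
Your proposal takes a genuinely different route from the paper, and it contains a real gap in the running--time claim.

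The paper does \emph{not} use quantum counting. It observes that the survival amplitude $c_F(t)=\langle\psi_I\,|\,e^{it(H_F-E_F)}\psi_I\rangle$ is directly measurable after an evolution of length $t$, and then builds a Poisson--weighted filter
\[
e^{-p}\sum_{k=0}^{ep}\frac{p^k}{k!}\,c_F(k)\ =\ \delta_2^{2}+O(1/N^{2}),\qquad p=O(\ln N/g_F^{2}),
\]
using the truncated identity $e^{pe^{i\omega}}=\sum_{k\le L}(pe^{i\omega})^{k}/k!+O(p^{L}/L!)$. The point is that this filter only has to separate the frequency $\omega=0$ from frequencies with $|\omega|\ge g_F$, and for that a degree--$O(\ln N)$ exponential sum (hence evolution times $t\le ep=O(\ln N)$ and total running time $\sum_{t\le ep}t=O((\ln N)^{2})$) suffices. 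No small angle is ever estimated; one reads off $\delta_2^{2}=\langle\psi_I|P_F\psi_I\rangle$ directly.

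Your counting scheme, by contrast, encodes $\delta_2$ as the rotation angle $\theta$ of the Grover iterate $G$ and tries to phase--estimate it. Here the Heisenberg/phase--estimation lower bound bites: resolving an eigenphase to additive accuracy $\epsilon$ requires total evolution time $\Omega(1/\epsilon)$ under $G$, since the $j$--th refinement round uses the controlled power $G^{2^{j}}$ and hence costs $2^{j}$ in running time. For $\epsilon=1/N^{2}$ this is $\Omega(N^{2})$, not $O((\ln N)^{2})$. Your sentence ``$O(\ln N)$ refinement rounds, each repeated $O(\ln N)$ times'' counts the number of rounds but not their cost, and it also conflates the $O(\ln N)$ repetitions needed to boost the \emph{confidence} to $1-1/N^{2}$ with the resources needed to boost the \emph{precision} to $1/N^{2}$. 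Even in the exact GUS case $Q_F=P_F$ where your Step~1 error analysis is vacuous, the scheme cannot meet the stated time bound; the leakage issue you flag in Step~1 is secondary to this.

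What makes the paper's argument work, and is missing from yours, is the observation that $\delta_2^{2}$ is the expectation of a \emph{spectral projector} of $H_F$, so it can be obtained from an $O(\ln N)$--term spectral filter rather than from an angle estimate.
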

\begin{remark}
\begin{enumerate}
\item[]
\item
The running time for this sub-algorithm is much shorter than
$\tau_+$, so it does not significantly affect the total running
time.
\item
A parallel result in the context of the quantum circuit model
was establishes earlier in \cite{BHT}.
\end{enumerate}
\end{remark}
\subsection{Gaps in the spectrum of the interpolating
Hamiltonian}\label{subsec: Gaps} Although the size of the gap in the
spectrum of $H(s)$ did not play much of the role so far, it is
instructive to estimate it for the following reason: The size of the
gap manifests itself in the adiabatic theorem of quantum mechanics
(AT), on which AQC is build.
The following assertion holds, see {\it e.g.} \cite{AE}:
\begin{thm}[Uniform adiabatic theorem]\label{thm:adi}
Suppose that the  $H(s)$ is twice differentiable and bounded family of self adjoint operators on the interval $[0,1]$ that is $\tau$--independent, and suppose in addition that
\be\label{eq:gap_g}
g\ := \ \dist(\lambda_1(s),\sigma(H(s)\setminus\lambda_1(s)) \ > \ 0 \quad \mbox{  for all } s\in[0,1]\,.
\ee
Then the solution
$\psi_\tau(s)$ of the IVP \eqref{eq:Sch} satisfies
\begin{equation}\label{eq:adi}
\lim_{\tau\rightarrow\infty}\dist\left(\psi_\tau(s),Range\
P_F\right) \ = \ 0\,.
\end{equation}
\end{thm}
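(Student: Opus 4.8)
The plan is to establish the standard quantitative form of the assertion: $\dist\big(\psi_\tau(s),\mathrm{Range}\,P(s)\big)=O(1/\tau)$ \emph{uniformly} in $s\in[0,1]$, where $P(s)$ is the spectral projection of $H(s)$ onto $\lambda_1(s)$; since $H(1)=H_F$ gives $P(1)=P_F$, this yields \eqref{eq:adi} (at $s<1$ the pertinent subspace is $\mathrm{Range}\,P(s)$, so the displayed limit is really the $s=1$ case). Throughout, let $U_\tau(s)$ be the unitary propagator of \eqref{eq:Sch}, so $\psi_\tau(s)=U_\tau(s)\psi_I$, and note that $\psi_I=P(0)\psi_I$ because $\psi_I$ is a ground state of $H(0)=H_I$. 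I would first record the regularity of the projection: by the gap hypothesis \eqref{eq:gap_g} one has $P(s)=\frac{1}{2\pi\im}\oint_{\Gamma}(z-H(s))^{-1}\,\di z$ with $\Gamma$ a (locally $s$--independent) circle of radius $g/2$ enclosing $\lambda_1(s)$ and nothing else, on which $\|(z-H(s))^{-1}\|\le 2/g$; since $H\in C^2([0,1])$ this forces $P\in C^2([0,1])$, hence $\lambda_1\in C^2$ and $\sup_{s}\|\dot P(s)\|<\infty$.

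The heart of the proof is the commutator equation. Differentiating $P^2=P$ and sandwiching with $P$ and $1-P$ gives $P\dot P P=(1-P)\dot P(1-P)=0$, so $\dot P$ is off--diagonal for the splitting $P\oplus(1-P)$. The spectral gap then makes the Sylvester--type equation $[H(s),X(s)]=\dot P(s)$ solvable by a bounded, $C^1$ operator family with $\|X(s)\|\le\tfrac{2}{g}\|\dot P(s)\|$ uniformly in $s$; explicitly one may take $X(s)=[S(s),\dot P(s)]$ with $S(s)$ the reduced resolvent of $H(s)$ at $\lambda_1(s)$, which is bounded by $1/g$ and inherits the smoothness of $H$ and $\lambda_1$. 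This is the only point at which the gap enters, and it is what forces the $C^2$ hypothesis on $H$ (one needs $\dot P\in C^1$, hence $X\in C^1$); I expect this step to be the main obstacle, the rest being routine.

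Finally comes the integration by parts in $\tau$. Put $\Phi_\tau(s):=U_\tau(s)^{*}P(s)U_\tau(s)-P(0)$; from $[H(s),P(s)]=0$ one gets $\dot\Phi_\tau(s)=U_\tau(s)^{*}\dot P(s)U_\tau(s)$ and $\Phi_\tau(0)=0$, so $\Phi_\tau(s)=\int_0^s U_\tau^{*}\dot P\,U_\tau\,\di s'$. Inserting $\dot P=[H,X]$ and using $\tfrac{\di}{\di s}\big(U_\tau^{*}XU_\tau\big)=\im\tau\,U_\tau^{*}[H,X]U_\tau+U_\tau^{*}\dot X U_\tau$ converts the integrand into a total derivative plus a lower--order term:
\be
\Phi_\tau(s)\ =\ \frac{1}{\im\tau}\big(U_\tau(s)^{*}X(s)U_\tau(s)-X(0)\big)\ -\ \frac{1}{\im\tau}\int_0^s U_\tau^{*}\,\dot X\,U_\tau\,\di s'\,,
\ee
so $\|\Phi_\tau(s)\|\le\tfrac1\tau\big(2\sup_{s}\|X(s)\|+\sup_{s}\|\dot X(s)\|\big)=O(1/\tau)$ uniformly in $s$. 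Since $U_\tau(s)$ is unitary, $\psi_I=P(0)\psi_I$, and $(1-P(0))P(0)=0$,
\be
\dist\big(\psi_\tau(s),\mathrm{Range}\,P(s)\big)\ =\ \big\|(1-P(s))U_\tau(s)P(0)\psi_I\big\|\ =\ \big\|\Phi_\tau(s)P(0)\psi_I\big\|\ \le\ \|\Phi_\tau(s)\|\ =\ O(1/\tau)\,,
\ee
which tends to $0$ as $\tau\to\infty$; taking $s=1$ and $\mathrm{Range}\,P(1)=\mathrm{Range}\,P_F$ gives \eqref{eq:adi}.
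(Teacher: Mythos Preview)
Your argument is the classical Kato commutator--equation proof of the gapped adiabatic theorem, and it is correct as written (including the identification $P(1)=P_F$ and the observation that the displayed limit, as stated with $P_F$, is really the endpoint case $s=1$). The derivation of $X(s)=[S(s),\dot P(s)]$ as a $C^1$ solution of $[H,X]=\dot P$ and the ensuing integration by parts are standard and cleanly executed; the only cosmetic point is that your norm bound for $\dot X$ uses $H\in C^2$ through $\dot S$ and $\ddot P$, which is exactly why the hypothesis requires two derivatives.

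There is nothing to compare against in the paper: Theorem~\ref{thm:adi} is \emph{not} proved there but merely quoted as a known result (``The following assertion holds, see e.g.~\cite{AE}''), with the reference being the Avron--Elgart adiabatic theorem. Your write-up is thus a self-contained substitute for that citation, and in fact gives the sharper quantitative statement $\dist(\psi_\tau(s),\mathrm{Range}\,P(s))=O(1/\tau)$ uniformly in $s$, which the paper later invokes informally (the ``rough upper bound'' $C/(\tau g^3)$ mentioned after \eqref{eq:adi}).
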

To AQC to be meaningful, one should choose the initial Hamiltonian
$H_I$ in such a way that $Rank\ P_I$ is small.  The error in the adiabatic
evolution (the right hand side of \eqref{eq:adi}) depends on the size of the gap $g$, with the rough upper
bound on the error of the form $\frac{C}{\tau g^3}$ \cite{JRS}.

\begin{thm}[The size of the gap]\label{thm:gap}
Let $g_I:=E_2^i-E_I$ be a gap between the ground state of the
initial Hamiltonian $H_I$ and the rest of its spectrum. Let
$\delta_4: =\|P_I Q_F\|$, where $Q_F$ is a projection onto the range
of $H_F$. Then we have the following estimate on the size of the gap
$g$ in \eqref{eq:gap_g}:
\begin{equation}\label{eq:gap}
 g \ \le \ 10\,\delta_4\,,
\end{equation}
provided $g_I>10\delta_4$.
\end{thm}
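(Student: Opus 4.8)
The plan is to argue by contradiction using a single trial vector. First, since $f$ is monotone with $f(0)=0$, $f(1)=1$, the operators $H(s)$ range exactly over $H(t):=(1-t)H_I+tH_F$, $t\in[0,1]$, so $g=\min_{t\in[0,1]}\dist\!\big(\lambda_1(t),\sigma(H(t))\setminus\{\lambda_1(t)\}\big)$ with $\lambda_1(t)$ the bottom eigenvalue of $H(t)$; note $\lambda_1(0)=E_I$, $\lambda_1(1)=E_F$ and $g\le g(0)=g_I$. As the claim is trivial when $g_I\le10\delta_4$, I would assume $g_I>10\delta_4$ and suppose for contradiction that $g>10\delta_4$, i.e.\ $\lambda_1(t)$ is separated from the rest of $\sigma(H(t))$ by more than $10\delta_4$ for every $t$. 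Take any unit $\psi_I\in Range\,P_I$. Since $H_FQ_F=H_F$ and $Range\,P_F\subseteq Range\,H_F$ (here $E_F\neq0$), one has $\delta_1=\|H_F\psi_I\|\le\|Q_F\psi_I\|$ and $\delta_2=\|P_F\psi_I\|\le\|Q_F\psi_I\|$, while $\|Q_F\psi_I\|=\|Q_FP_I\psi_I\|\le\|Q_FP_I\|=\|P_IQ_F\|=\delta_4$; hence $\delta_1,\delta_2\le\delta_4$ and
\[
\|H(t)\psi_I-(1-t)E_I\,\psi_I\|\ =\ t\,\|H_F\psi_I\|\ \le\ t\delta_4,\qquad t\in[0,1],
\]
so $\psi_I$ is an approximate eigenvector of $H(t)$ with approximate eigenvalue $(1-t)E_I$, and in particular $\dist\!\big((1-t)E_I,\sigma(H(t))\big)\le t\delta_4$.

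Next I would establish, by a bootstrap, that under the standing assumption $g>10\delta_4$ the curve $t\mapsto(1-t)E_I$ never drifts more than $\delta_4$ from $\lambda_1(t)$. It agrees with $\lambda_1(t)$ at $t=0$; also $t\mapsto\lambda_1(t)$ is Lipschitz (Weyl's perturbation inequality) and $t\mapsto(1-t)E_I$ is affine. The key implication is: \emph{if} $|\lambda_1(t)-(1-t)E_I|\le5\delta_4$, then the spectral point of $H(t)$ nearest $(1-t)E_I$ lies within $6\delta_4<10\delta_4$ of $\lambda_1(t)$, so by the $10\delta_4$-separation it must equal $\lambda_1(t)$, which sharpens the bound to $|\lambda_1(t)-(1-t)E_I|\le t\delta_4\le\delta_4$. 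The set $\{t\in[0,1]:|\lambda_1(t)-(1-t)E_I|\le5\delta_4\}$ is then closed, nonempty and --- by this strict sharpening together with continuity --- open, hence all of $[0,1]$; so $|\lambda_1(t)-(1-t)E_I|\le t\delta_4$ for all $t$. I expect this step --- precluding the approximate-eigenvalue curve from detaching from the ground energy and tracking a higher level instead --- to be the main obstacle, and it is exactly here that the hypothesis $g_I>10\delta_4$ (equivalently, $\delta_4$ small relative to the gap) gets used.

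Granting the bootstrap, I would finish as follows. Fix $t$ and let $R(t)$ be the spectral projection of $H(t)$ onto $\big((1-t)E_I-3\delta_4,\,(1-t)E_I+3\delta_4\big)$. This interval contains $\lambda_1(t)$, and any eigenvalue inside it is within $4\delta_4<10\delta_4$ of $\lambda_1(t)$ and hence equals $\lambda_1(t)$; thus $R(t)$ is precisely the ground eigenprojection of $H(t)$. Feeding the displayed bound into the spectral theorem gives
\[
\|(\1-R(t))\psi_I\|^2\ \le\ \frac{\|H(t)\psi_I-(1-t)E_I\,\psi_I\|^2}{(3\delta_4)^2}\ \le\ \frac{t^2\delta_4^2}{9\delta_4^2}\ \le\ \frac19 .
\]
At $t=1$, $R(1)$ is the ground eigenprojection of $H(1)=H_F$, i.e.\ $R(1)=P_F$, so $\|P_F\psi_I\|^2=\|R(1)\psi_I\|^2\ge\tfrac89$; but $\|P_F\psi_I\|\le\delta_4<g_I/10\le\tfrac15$ (using $g_I=E_2^i-E_1^i\le2$), so $\|P_F\psi_I\|^2<\tfrac1{25}$, a contradiction. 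Therefore $g\le10\delta_4$. \ep
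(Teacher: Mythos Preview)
Your proof is correct and takes a genuinely different route from the paper's.

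The paper proceeds constructively via Krein's formula: it rewrites the eigenvalue problem for $H_t=H_I+tH_F$ as a determinant equation $\det(K^{-1}(E)+tH_FQ_F)=0$ on the $m$-dimensional range of $Q_F$, decomposes $K(E)=Q_F(H_I-E)^{-1}Q_F$ into a regular part $\hat K(E)$ and a singular piece $\delta_4^2 D/(E_I-E)$, and then uses monotonicity and signature arguments to locate, for specific values $t_j$, a pair of eigenvalues of $H_{t_j}$ straddling $E_I$ within distance $\beta<5\delta_4$ on each side. This pins down \emph{where} along the interpolation the small gap occurs and ties it to the negative eigenvalues of $H_F$.

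Your argument is instead variational and indirect: a single trial vector $\psi_I$, the elementary bound $\|(H(t)-(1-t)E_I)\psi_I\|\le t\delta_4$, and a clean open--closed bootstrap (using the assumed gap $>10\delta_4$) force $(1-t)E_I$ to track $\lambda_1(t)$ all the way to $t=1$, after which the spectral projection estimate and the trivial bound $\delta_2\le\delta_4<1/5$ yield the contradiction. This avoids Krein's formula, the Schur complement, and the two auxiliary lemmas entirely. The trade-off is that your method is non-constructive --- it does not identify the parameter values at which the gap closes --- whereas the paper's machinery, once set up, gives more structural information and could in principle be pushed to sharper constants or to count the number of avoided crossings. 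For the bare statement $g\le10\delta_4$, your approach is the more economical one.

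A minor remark: once your bootstrap gives $|E_F|=|\lambda_1(1)|\le\delta_4$, you already have a contradiction without the $R(1)$ computation, since $0\in\sigma(H_F)\setminus\{E_F\}$ (by Assumption~\ref{assump'} and $E_F<0$) forces $g(1)\le|E_F|\le\delta_4<10\delta_4$. Your stated finish via $\|P_F\psi_I\|^2\ge8/9$ is of course also valid.
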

\begin{remark}
In fact one can relax the condition $g_I>10\delta_4$, but to keep the
presentation simple we impose this additional constraint.
\end{remark}
\subsection{Robust adiabatic quantum computing}
In this section we propose a necessary technical requirement on the
quantum device for AQC Grover's search to be successful.
The $\sqrt N$ speedup
in AQC algorithm obtained in the tractable problems ({\it c.f.}
\cite{DMV} for the Grover's problem or its rigorous treatment in
\cite{JRS}) relies on a special choice of the parametrization $f(s)$ in
\eqref{eq:H}. Namely, it is constructed in such a way that $\dot
f(s)$ is  small at instances $\{s_j\}$ at which the spectral gap
$g(s_j):=\lambda_2(s_j)-\lambda_1(s_j) $ of $H(s_j)$  is  small. In the AQC jargon, it is usually referred to as the quantum search by local adiabatic evolution. It is interesting to compare this approach with the construction used in Theorem \ref{thm:at2}, where this strategy is pushed to the extreme, namely $f$ used there is actually the constant except for the endpoints $s=0,1$ where it jumps.
There are two practical problems with this approach:    
\begin{enumerate}
\item The values  $\{s_j\}$ obviously depend on $H_F$ and in particular
on $E_F$ (even for the Grover's problem, as the simple scaling
argument shows). So to choose such an $f$ one has to know the
spectral structure of $H_F$ with $o(1/\sqrt N)$ precision. This is
tacitly assumed in \cite{DMV}. Note that albeit Theorem
\ref{thm:at2} (used in conjunction with Theorem \ref{thm:gap})
represents an improvement with this regard, it still requires
knowledge of $E_F$.
\item Even if this technical obstacle can be overcome, the extreme susceptibility
of $\psi_\tau(1)$  to the parametrization $f$ poses a radical
problem in practical implementation.  Indeed, it is presumably
extremely difficult to enforce $\dot f=0$ for a long stretch of the
physical time, as the realistic computing device inevitably
fluctuates due to the presence of the noise. Some models that try to take into the account the noise were proposed, see {\it e.g. } \cite{CF,AAN}, but to the best of our knowledge all of the existing constructions contain ad hoc parameters and are not derived from the first principles. For some interesting rigorous work in this direction that considers de-phasing open systems see \cite{avron}.
\end{enumerate}
Another issue \footnote{We thank the referee for bringing this point to our attention.} that will motivate our last result below is related to the fact that the adiabatic theorems fall into two categories: Those that describe the solutions for {\it all} times, including times $s \in [0, 1]$, and those that characterize the solutions at large times $s > 1$ where the Hamiltonian is time independent again. Interestingly they give more precision for long times. We call the first category, the one that applies to all times, uniform, the second is the long time category.

A representative result from the uniform category is Theorem \ref{thm:adi} above. A characteristic result (see, {\it e.g.} \cite{berry, nenciu,martinez}) which lies in the long time category is
\begin{thm}[Long time adiabatic theorem]\label{thm:adi_long}
Suppose that the  $H(s)$ is smooth (that is $C^\infty$ class) and bounded family of self adjoint operators  with $\dot H (s)$ supported on $[0,1]$ that is $\tau$--independent, and suppose in addition that \eqref{eq:gap_g} holds as well. Then the solution
$\psi_\tau(s)$ of the IVP \eqref{eq:Sch} satisfies
\begin{equation}\label{eq:adi_long}
\dist\left(\psi_\tau(s),Range\
P_F\right) \ = \ o(\tau^{-n})\mbox{ for } s\ge1\,,
\end{equation}
for any $n\in\N$.
\end{thm}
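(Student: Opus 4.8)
The plan is to prove the stronger statement that $\|(1-P_F)\psi_\tau(1)\|=O(\tau^{-N})$ for every $N\in\N$; this yields the claim because, for $s\ge1$, $H(s)=H(1)=H_F$, so $\psi_\tau(s)=e^{-i\tau(s-1)H_F}\psi_\tau(1)$ and hence $\dist(\psi_\tau(s),\mathrm{Range}\,P_F)=\|(1-P_F)\psi_\tau(s)\|=\|(1-P_F)\psi_\tau(1)\|$ (using $[H_F,P_F]=0$), while $O(\tau^{-N})$ with $N=n+1$ is $o(\tau^{-n})$. Two ingredients enter: the super--adiabatic (Nenciu--Berry) expansion of the ground--state projection of $H(s)$, and the elementary fact that, since $H\in C^\infty$ with $\dot H$ supported in $[0,1]$, all derivatives $H^{(k)}(0)$ and $H^{(k)}(1)$ with $k\ge1$ vanish (continuity of the derivatives of $\dot H$, which is identically zero outside $[0,1]$).

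Using the gap hypothesis \eqref{eq:gap_g}, let $P(s)=\frac1{2\pi i}\oint_{\Gamma(s)}(z-H(s))^{-1}\,dz$ be the smooth spectral projection onto $\lambda_1(s)$, with $P(0)=P_I$ and $P(1)=P_F$, and let $S(s)$ be the reduced resolvent at $\lambda_1(s)$. For fixed $N$ I would construct the truncated super--adiabatic projection $\Pi^{(N)}_\tau(s)=P(s)+\sum_{j=1}^{N}\tau^{-j}\Pi_j(s)$ by the standard recursion: split $\Pi_j=\Pi_j^{\mathrm{d}}+\Pi_j^{\mathrm{od}}$ relative to $P(s)$, solve the off--diagonal part from a commutator equation $[H,\Pi_j^{\mathrm{od}}]=(\text{lower--order data, built from }\dot\Pi_{j-1})$ via $S(s)$, and fix $\Pi_j^{\mathrm{d}}$ by requiring $(\Pi^{(N)}_\tau)^2=\Pi^{(N)}_\tau$ up to $O(\tau^{-N-1})$ (a final Riesz--projection correction of size $O(\tau^{-N-1})$ makes it an exact orthogonal projection, if desired). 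Two properties are then checked. First, by induction each $\Pi_j$, and every derivative of it, is a finite sum of products of $P$, $S$ and derivatives $H^{(\ell)}$ in which at least one factor $H^{(\ell)}$ has $\ell\ge1$; with the flatness of $H$ this forces $\Pi_j^{(k)}(0)=\Pi_j^{(k)}(1)=0$ for all $j\ge1$, $k\ge0$, whence $\Pi^{(N)}_\tau(0)=P_I$ and $\Pi^{(N)}_\tau(1)=P_F$ \emph{exactly}. Second, the construction is arranged so that the intertwining defect obeys $\bigl\|[H(s),\Pi^{(N)}_\tau(s)]-\tfrac{i}{\tau}\dot\Pi^{(N)}_\tau(s)\bigr\|=O(\tau^{-N-1})$ uniformly on $[0,1]$.

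The conclusion follows from a Kato--Duhamel computation. With $U_\tau(s)$ the propagator of \eqref{eq:Sch}, the exact identity $\frac{d}{ds}\bigl(U_\tau(s)^\ast\Pi^{(N)}_\tau(s)U_\tau(s)\bigr)=i\tau\,U_\tau(s)^\ast\bigl([H(s),\Pi^{(N)}_\tau(s)]-\tfrac{i}{\tau}\dot\Pi^{(N)}_\tau(s)\bigr)U_\tau(s)$ has right--hand side of norm $O(\tau^{-N})$, so integrating over $[0,1]$ gives $\|U_\tau(1)^\ast\Pi^{(N)}_\tau(1)U_\tau(1)-\Pi^{(N)}_\tau(0)\|=O(\tau^{-N})$. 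Since $\psi_\tau(1)=U_\tau(1)\psi_I$, $\Pi^{(N)}_\tau(0)\psi_I=P_I\psi_I=\psi_I$, and $\Pi^{(N)}_\tau(1)=P_F$, this yields $\|(1-P_F)\psi_\tau(1)\|=\|U_\tau(1)\psi_I-\Pi^{(N)}_\tau(1)U_\tau(1)\psi_I\|=\|U_\tau(1)\psi_I-U_\tau(1)\Pi^{(N)}_\tau(0)\psi_I\|+O(\tau^{-N})=O(\tau^{-N})$, as required.

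I expect the main obstacle to be the bookkeeping of the super--adiabatic recursion: one must show simultaneously that the $\Pi_j$ can be chosen with $\Pi^{(N)}_\tau$ an exact projection up to $O(\tau^{-N-1})$, that the intertwining defect is genuinely $O(\tau^{-N-1})$, and --- the point specific to this setting --- that every $\Pi_j$ and each of its derivatives carries a factor $H^{(\ell)}$ with $\ell\ge1$, which is exactly what kills the endpoint corrections and upgrades the uniform $O(\tau^{-1})$ of Theorem \ref{thm:adi} to $o(\tau^{-n})$ for all $n$. All of this is by now standard; see \cite{nenciu,martinez}. (Had $H$ been analytic in a strip, the same scheme with Gevrey--type estimates on the $\Pi_j$ would have produced an exponentially small bound in $\tau$.)
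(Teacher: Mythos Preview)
The paper does not prove Theorem~\ref{thm:adi_long}; it is stated as a known result and attributed to \cite{berry,nenciu,martinez}. Your sketch is precisely the standard super--adiabatic (Nenciu--Berry) expansion argument from those references, with the key observation---that the smoothness of $H$ combined with $\supp\dot H\subset[0,1]$ forces all correction terms $\Pi_j$ to vanish at the endpoints---correctly identified; there is nothing to compare against beyond noting that you have reproduced the intended proof.
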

\begin{remark}
\begin{enumerate}
\item[]
\item In words one can say that starting and finishing the interpolation slowly decreases the error in the adiabatic theorem.
\item There is, in general, no uniformity in $n$; the term on the right hand side is of order $c_n \tau^{-n}$ where $c_n$ grows rapidly with $n$ ({\it c.f.} the following  discussion).
\item The distinction between the uniform and the long time AT has an analog in integrals. Suppose that $g(s)\in C^{\infty}([0, 1])$. Then
\[\int_0^s g(t) e^{it\tau}dt \ = \ \begin{cases}
&o(\tau^{-n})\,,\mbox{ if } s\ge 1\, ; \\
 &   O(\tau^{-1}) \mbox{ if }s\in(0,1)\,.
\end{cases}\]
\end{enumerate}
\end{remark}
In the application to AQC it is natural  to investigate the dependence of the coefficients $c_n$ in terms of the gap $g$ and minimize the running time $\tau$ in such a way that $c_n \tau^{-n}=o(1)$ for some optimally chosen value $n$.  The recent result in this direction, \cite{LRH}, gives $\tau=O( g^{-3})$. For the sketch of the argument that uses (truncated) Nenciu's expansion technique \cite{ESn} and leads to the sharper estimate  $\tau=O( g^{-2}|\ln g|^5)$ see \cite{elgart}.  One is then tempted to combine the starting and finishing  slowly strategy with the quantum search by local adiabatic evolution strategy in order to minimize the error in the adiabatic theorem. Such analysis was undertaken recently for Grover's search problem in \cite{RPL,WB}.

Inspired by the above discussion, we will assume that in the robust setting  for any given moment $s$ inside the interval $J$ described below (and which excludes the vicinities of the endpoints $s=0,1$) the value $\dot f(s)$ is greater than some small but fixed $\kappa>0$. To motivate the definition of  $J$, suppose that the  function $f$ lies in the long time category, {\it i.e.}  $\dot f$ is supported in $[0,1]$ and $f$ is smooth. Let $b\in[0,1]$ be such that $\ddot f$ does not change sign on $[b,1]$  (but it can vanish there). It is not difficult to see that since $f$ is monotone, $f$ has to be concave on  $[b,1]$, hence $\ddot f\le0$ there. Now let us define the interval $J$ for {\it any}  differentiable function $f$. Let $a=\min_{s\in[0,1]}\{f(s)=1/3\}$. Let $b=\min_{s\in[0,1]}\{f(s)\mbox{ is concave on }[s,1]\}$. We then define the interval $J:=[a,b]$ for $f\in C^1$ provided $a\le b$, $J=\emptyset$ if $a> b$, and $J:=[a,1]$ if $f\notin C^1$.  To illustrate this notion, consider  $f\in \C^\infty(\R)$  constructed as follows:
\[f(t) \ = \ \int_{-\infty}^t g(s) ds\,,\quad g(s) \ = \  \begin{cases}
&0\,,\mbox{ if } s\notin[0,1]\, ; \\
 &  \alpha \, e^{\frac{1}{s(s-1)}} \mbox{ if }s\in(0,1)\,.
\end{cases}\]
The factor $\alpha$ here is a normalization constant, chosen so that $f(1)=1$. We then have
\[\ddot f(t) \ = \ \left(\frac{1}{t^2}\,-\, \frac{1}{(1-t)^2}\right) g(t)\,,\mbox{ for } t\in[0,1]\,,\]
so that the only inflection point is $t=1/2$. Hence $f$ is convex on $[0,1/2]$ and is concave on $[1/2,1]$. We therefore get $b=1/2$ and $J=[f^{-1}(1/3),1/2]$. The convexity of $f$ on $[0,1/2]$ implies that $f(y)-f(x)\ge \dot f(x)(y-x)$ for any $x,y\in[0,1/2]$. Choosing $x=s$, $y=0$, we obtain $\dot f(s)\ge f(s)/s$ for $s\in(0,1/2]$. Since $f$ is monotone, we conclude that $\dot f(s)\ge 2/3$ on $[a,1/2]$. So in this example $\kappa=2/3$.

\par
The utility of the introduction of the interval $J$ is as follows:  On the interval $[a,1]$ the function $f$ is concave, hence it satisfies $f(y)-f(x)\le \dot f(x)(y-x)$ for any $x,y\in[b,1]$. In particular, we have
\be\label{eq:concave}
1-f(t) \ \le \ \dot f(t)(1-t)\ \le \ \dot f(t)\mbox{ for } t\in[b,1]\,,
\ee
the relation we are going to exploit.

Our last result establishes that in the case of the small rank initial Hamiltonian the robust version of AQC does not yield a significant speedup unless $\kappa$ can be made exponentially small:
\begin{thm}[Robust lower bound on the running time]\label{thm:at5}
Suppose that $f$ in Eq.~\eqref{eq:H} is (piecewise) differentiable and satisfies
$\dot f(s)\ge\kappa>0$ for $s\in J$ with the interval $J$ defined above. Also, let us assume that $E_I=-1$. Then, if
$\tau<\tau_r=O\left(\frac{\kappa}{m^2\delta^2\,\ln\delta}\right)$,
we have
\begin{equation}\label{eq:mintime''}
\left|\langle\psi_I|\psi_\tau(1)\rangle\right|\ >\
\frac{2\sqrt{6}}{5}\,+\,2\delta\,,
\end{equation}
where $\delta=\norm{Q_I Q_F}$. Hence the running time $\tau$ for
which $\|Q_F\psi_{\tau}(1)\|\ge1/5$ cannot be smaller than $\tau_r$.
\end{thm}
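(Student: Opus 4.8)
The plan is to estimate how much the evolution $\psi_\tau(s)$ can be pushed away from the initial vector $\psi_I$ by time $s=1$, and to show that unless $\tau$ is large the overlap $|\langle\psi_I|\psi_\tau(1)\rangle|$ remains too big for $\|Q_F\psi_\tau(1)\|$ to reach $1/5$. The key structural input is Assumption \ref{assump'}: since $H_F$ has rank $m$ and $\|H_F\|=1$, the operator $H_F$ (and in particular $Q_F H(s) Q_F$) lives essentially on an $m$-dimensional space, and the interpolating Hamiltonian acts as $H(s)=(1-f(s))H_I+f(s)H_F$ with $H_I$ also of small rank and $E_I=-1$. I would work in the interaction picture with respect to $(1-f(s))H_I$ (or with respect to $H_I$ after absorbing the scalar part), so that the generator of the remaining dynamics is $f(s)$ times a conjugated version of $H_F$. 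Because $\dot f(s)\ge\kappa$ on the interval $J=[a,b]$, I can change variables from $s$ to $f$ on $J$: the physical time spent while $f$ runs over $[1/3, f(b)]$ is at most $(\text{something})/\kappa$ — this is precisely why the factor $\kappa$ appears in $\tau_r$. Outside $J$, near $s=1$, the relation \eqref{eq:concave}, $1-f(t)\le\dot f(t)$, lets me control the contribution of the $H_I$ part of $H(s)$ (whose coefficient is $1-f(t)$) by a $\dot f$-weighted integral, again bounded after the change of variables.

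The core estimate is a Duhamel/Gronwall bound. Write $\psi_\tau(1)-\psi_I$ as an integral of $-\,i\tau H(s)$ (in the appropriate picture) against $\psi_\tau(s)$ and split $H(s)$ into its $H_F$-piece and its $H_I$-piece. For the $H_F$-piece I use $\|H_F\phi\|\le\|Q_F\phi\|$ plus the fact that $\|Q_F\psi_\tau(s)\|$ is itself controlled: near $s=0$ it is small (of order $\delta=\|Q_IQ_F\|$ applied to $\psi_I$, using that $\psi_I\in\mathrm{Range}\,Q_I$ since $E_I=-1\ne 0$), and it can only grow at a rate governed by $\tau\dot f$. This gives a self-improving inequality of the form $\|Q_F\psi_\tau(s)\| \le C(\delta + \tau\kappa^{-1}m\,\|Q_F\psi_\tau(\cdot)\|_{\infty})$ on $J$, which for $\tau$ below the stated threshold forces $\|Q_F\psi_\tau(1)\|$ to stay $O(\delta)$ away from where it started — hence small. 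Feeding this back into the Duhamel expansion for $\langle\psi_I|\psi_\tau(1)\rangle$, the deviation $|\langle\psi_I|\psi_\tau(1)\rangle - 1|$ is bounded by a constant times $\tau m^2\delta^2|\ln\delta|/\kappa$ plus an $O(\delta)$ term, so the threshold $\tau_r=O\!\big(\kappa/(m^2\delta^2\ln\delta)\big)$ is exactly what makes this smaller than $1-\tfrac{2\sqrt6}{5}-2\delta$. The appearance of $m^2$ (rather than $m$) and the logarithm $\ln\delta$ should come out of iterating the Duhamel expansion to second order and from the length of $J$ (the number of "$e$-folds" $f$ must traverse, $\sim|\ln\delta|$, before $Q_F\psi_\tau$ could conceivably become large).

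Once $|\langle\psi_I|\psi_\tau(1)\rangle|>\tfrac{2\sqrt6}{5}+2\delta$ is established, the last sentence follows from an orthogonality/geometry argument: if $\|Q_F\psi_\tau(1)\|\ge 1/5$, then writing $\psi_\tau(1)=Q_F\psi_\tau(1)+(\mathbf 1-Q_F)\psi_\tau(1)$ and using $\|Q_IQ_F\|=\delta$ together with $\psi_I\in\mathrm{Range}\,Q_I$ gives $|\langle\psi_I|\psi_\tau(1)\rangle|\le \delta\cdot\tfrac15\cdot(\text{norm factors}) + \sqrt{1-1/25}$, and $\sqrt{24/25}=\tfrac{2\sqrt6}{5}$, so the overlap would be at most $\tfrac{2\sqrt6}{5}+O(\delta)$ — contradicting the lower bound (with the constants arranged so the $2\delta$ margin absorbs the error).

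The main obstacle I anticipate is the self-consistent control of $\|Q_F\psi_\tau(s)\|$ along the whole interval, i.e. making the Gronwall argument close with the right power of $m$ and the logarithmic factor rather than something weaker. In particular, one must handle the $H_I$-part of $H(s)$ on $[b,1]$ carefully: its coefficient $1-f(s)$ is not small pointwise in $s$, only small relative to $\dot f(s)$ via \eqref{eq:concave}, so the estimate genuinely needs the change of variables $s\mapsto f$ and the hypothesis $\dot f\ge\kappa$ on $J$ to convert it into an integrable bound — and getting the endpoint contributions near $s=1$ (where $\dot f$ could be large and $1-f$ small) to cooperate is where the bookkeeping is delicate.
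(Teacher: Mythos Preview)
Your overall strategy --- show that $|\langle\psi_I|\psi_\tau(1)\rangle|$ stays large and then deduce the last sentence by the Pythagorean bound $|\langle\psi_I|\psi_\tau(1)\rangle|\le \delta+\sqrt{1-1/25}=\delta+\tfrac{2\sqrt6}{5}$ --- is correct, and you have correctly located where $\kappa$ enters (change of variables $s\mapsto f$ on $J$) and why the concavity relation \eqref{eq:concave} is needed on $[b,1]$.

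However, the core analytic step has a genuine gap. A direct Gronwall/Duhamel bound on $\|Q_F\psi_\tau(s)\|$ only gives growth at rate $\tau(1-f(s))\,\|Q_F H_I\bar Q_F\|\le\tau(1-f(s))\delta$, hence after integration a deviation of order $\tau\delta$ (this is essentially the Theorem~\ref{thm:at1} bound), not $\tau\delta^2$. Your proposed ``self-improving'' inequality does not explain how the second factor of $\delta$ appears, and your suggested sources of $m^2$ and $\ln\delta$ (second-order Duhamel, ``$e$-folds'' of $f$) are speculative and do not match what actually produces them. The paper obtains the extra $\delta$ by a different device: it introduces an $\epsilon$-regularized resolvent $B(s)=\bigl(f(s)H_F+1-f(s)+i\epsilon\bigr)^{-1}$ and the \emph{approximate null vector} $\phi(s)=\psi_I-f(s)H_FB(s)\psi_I$ of $\hat H(s)$. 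The crucial identity is
\[
\hat H(s)\phi(s)\;=\;-f(1-f)\,H_IH_FB\,\psi_I\;-\;i\epsilon\,fH_FB\,\psi_I,
\]
whose first term carries $\|H_IQ_F\|\cdot\|Q_F\psi_I\|\le\delta^2$; this is where the improvement over the naive bound comes from. The factor $m$ arises from summing $1/\Delta_\epsilon$ over the (at most $m{+}1$) eigenvalues of $H_F$ in the integral $\int_0^1\dot f/\Delta_\epsilon\,ds\le -2(m{+}1)\ln\epsilon$, and the logarithm is $\ln\epsilon$ from that same resolvent integral; the second power of $m$ and the conversion $\ln\epsilon\to\ln\delta$ come from the optimizing choice $\epsilon\sim m\delta$. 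None of this is visible from a bootstrap on $\|Q_F\psi_\tau\|$ alone. Without the resolvent ansatz (or an equivalent oscillatory-integral argument that extracts the extra $\delta$ from cancellations in $\int U_\tau(1,s)H_F\psi_I\,ds$), your outline does not close at the stated threshold $\tau_r$.
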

\begin{remark}
\begin{enumerate}
\item[]
\item
This theorem tells us that for a generic $H_I$ of the small rank the robust running time $\tau_r$ cannot be smaller than $O(\kappa N/\ln N)$. Hence  unless the control precision $\kappa$ is on the order of $O(\ln N/\sqrt{N})$, AQC is not much better than its  classical counterpart that solves GUS for $\tau=O(N)$.
\item As we remarked earlier, the requirement $E_I=-1$ is a very mild one.
\end{enumerate}
\end{remark}
\section{Proof of Theorem \ref{thm:at1}}
The proof is based on the following observation: Note that $\psi_I$ is an approximate eigenvector of $H(s)$ since $(H(s)-(1-f(s))E_I I)\psi_I=f(s)H_F\psi_I$, and the norm of the right hand side is equal to $\delta_1$. So by the first order perturbation theory, the dynamical evolution of the state $\psi_I$ given by \eqref{eq:Sch} will stay close (up to the dynamical phase) to $\psi_I$, unless the running time $\tau$ is such that the total variation, given by  $\tau \delta_1$,  is of order $1$. The proof below formalizes  this argument.

For a solution $\psi_\tau(s)$ of \eqref{eq:Sch}, let
\be\label{eq:phi'} \phi_\tau(s):= \
e^{if_\tau(s)}\psi_\tau(s)\,,\quad f_\tau(s)=\tau\,E_I\int_0^s
(1-f(r))\,\di r\,.\ee
Then one can readily check that $\phi_\tau(s)$ satisfies the IVP
\begin{equation}\label{eq:Schd'}
i\dot\phi_\tau(s) \ = \ \tau \hat H(s) \psi_\tau(s)\,,\quad
\phi_\tau(0)\ = \ \psi_I\,,
\end{equation}
where
\[\hat H(s)\ = \ (1-f(s))\,(H_I-E_I\,I)\,+\,f(s)\,H_F\,.\]
The factor $e^{if_\tau(s)}$ is usually referred to as a dynamical
phase.
\par
Let $U_\tau(t,s)$ be a semigroup generated by $\hat H(s)$, namely
\begin{equation}\label{eq:semigroup}
-i\,\partial_s \,U_\tau(t,s) \ = \ \tau\,U_\tau(t,s)\hat
H(s)\,;\quad
 U_\tau(s,s)\ = \ I\,; \quad t\ge s\,.
\end{equation}
Then the solution $\phi_\tau(1)$ of \eqref{eq:Schd'} is equal to
$U_\tau(1,0)\,\psi_I$. On the other hand,
\[I\,-\,U_\tau(1,0) \ = \ \int_0^1 \partial_s \,U_\tau(1,s)\,\di s\
= \ i\,\tau\, \int_0^1U_\tau(t,s)\hat H(s)\,\di s\,,\]
hence applying both sides on $\psi_I$ we obtain
\[\psi_I\,-\,\phi_\tau(1) \ = \ i\,\tau\, \int_0^1U_\tau(t,s)\hat H(s)\psi_I\,\di
s\,.\]
We infer
\[\|\psi_I\,-\,\phi_\tau(1)\| \ \le \tau\, \int_0^1\|\hat H(s)\psi_I\|\,\di
s\,.\]
But
\[\hat H(s)\psi_I \ = \
\left\{(1-f(s))\,(H_I-E_I\,I)\,+\,f(s)\,H_F\right\}\psi_I \ = \
f(s)\,H_F\psi_I\,,\]
and we get the bound
\[\|\psi_I\,-\,\phi_\tau(1)\| \ \le \ \tau\, \delta_1 \int_0^1 f(s)\,\di
s\ \le \ \tau\, \delta_1\,,\]
where in the last step we used $0\le f(s)\le1$. By the triangle inequality,
\begin{eqnarray*}\Big|\|P_F\psi_I\|\,-\,\|P_F\phi_\tau(1)\| \Big| & \le &
\|P_F\psi_I\,-\,P_F\phi_\tau(1)\| \ \le \
\|\psi_I\,-\,\phi_\tau(1)\|\\ &\le & \tau\, \delta_1\,,\end{eqnarray*}
so that
\[\|P_F\phi_\tau(1)\| \ \le \ \tau\, \delta_1 \,+\,\delta_2\,.\]
 On the other hand, by the assumption of the theorem
$\|P_F\psi_\tau(1)\|\ge1/5$, hence $\|P_F\phi_\tau(1)\|\ge1/5$. As a
result, we can bound
\[1/5 \ \le \ \tau\, \delta_1 \,+\,\delta_2\,,\]
and the assertion follows.
 \hfill\ep
\section{Proof of Theorem \ref{thm:at2}}
We will choose $H_I=-|\psi_I\rangle\langle\psi_I|$, and a non adiabatic parametrization
\[
f(s) \ = \ \begin{cases}
&0\,,\mbox{ if } s=0\, , \\
 &   \alpha\equiv\frac{1}{1-E_F} \mbox{ if }s\in(0,1)\,,\\
 &1\mbox{ if }s=1\,.
\end{cases}
\]
That means we move extremely quickly (instantly in fact) to the
middle of the path, stay there for the time $\tau$, and then move
quickly again to the end of the path. We first observe that
regardless of the choice of $f(s)$ in \eqref{eq:H} we have
$\psi_\tau(s)\in Y$, where $Y$ is a subspace of the Hilbert space,
spanned by the vectors in the range of $H_F$ and $\psi_I$. Here we
have used the fact that the range of $H_I$ by the assumption of the
theorem coincides with $Span\{\psi_I\}$. Let us choose the
orthonormal basis $\{e_i\}_{i=1}^{m+1}$ for $Y$ as follows: The
first $m$ vectors in the basis are the eigenvectors of $H_F$
corresponding to $\{E_i^f\}$ that differ from zero, and the last
vector $e_{m+1}$ is obtained from $\psi_I$ using the Gram Schmidt
procedure. That is,
\[e_{m+1}:=\ \frac{\bar Q_F\psi_I}{\|\bar Q_F\psi_I\|} \ = \
\frac{\bar Q_F\psi_I}{\sqrt{1-\delta_3^2}} \,,\quad \bar Q_F\ =\
1-Q_F\,.\]
We then have
\begin{eqnarray}\label{eq:dist}
\|e_{m+1}-\psi_I\|^2& =& \|Q_Fe_{m+1}-Q_F\psi_I\|^2\,+\, \|\bar
Q_Fe_{m+1}-\bar Q_F\psi_I\|^2\nonumber \\ &=& \delta_3^2\,+\,
\left(\frac{1}{\sqrt{1-\delta_3^2}}\,-\,1\right)^2
\left(1-\delta_3^2\right)\nonumber  \\ &\le & \delta_3^2
\,+\,\delta_3^4 \,.
\end{eqnarray}
Our choice of $g$ ensures that
\[\psi_\tau(1) \ = \ e^{-i\alpha\tau\cdot(E_FP_I\,+\,H_F)}\psi_I\,.\]
Here we introduce $P_F$ as the orthogonal projection onto the span
of $\{e_i\}$ with $E^f_i=E^f_1$, and $P_{m+1}$ the orthogonal
projection onto $e_{m+1}$. Clearly
\[ P_{m+1}\ = \ \frac{\bar Q_F\,P_I\,\bar Q_F}{1-\delta_3^2} \]
We want to compute the matrix elements of the propagator
$e^{-i\alpha\tau(H_I\,+\,H_F)}$ in the basis $\{e_i\}$. To this end,
we observe that in this basis $H_F=diag(E_F,\ldots,E_m^f,0)$, and
$H_I$ is a block matrix such that
\begin{equation}\label{eq:HIest}
\left\|E_FP_I \, -\, \left[\begin{array}{cc}
0 & \delta_3\,V^* \\
\delta_3\, V &E_F\\
\end{array}\right]\right\|\ \le \ 3\delta_3^2 |E_F|\,,
\end{equation}
where $\|V\|=|E_F|$.
Indeed, we have
\[ \norm{Q_F\ P_I\ Q_F} \ = \,\delta_3^2\,,\]
\[ \norm{P_{m+1}\ P_I\ P_{m+1}}  \ = \ 1-\delta_3^2\,,\]
\[ \norm{P_{m+1}\ P_I\ Q_F}\ = \ \delta_3\,(1-\delta_3^2)^{1/2}\,.\]

Together, we obtain that in this basis
\[\left\|E_FP_I\,+\,H_F \,- \, \left[\begin{array}{cc}
D & \delta_3\,V^* \\
\delta_3\, V &E_F\\
\end{array}\right]\right\|\ \le\ 3\delta_3^2 |E_F|\,,\]
where $D=diag(E_F,\ldots,E_m^f)$. A simple perturbative argument
({\it cf.} Duhamel formula \eqref{eq:Du1} below) shows that
\begin{equation}\label{eq:K}
\left\|e^{-i\alpha\tau\cdot(E_FP_I\,+\,H_F)} \,- \, e^{-i\alpha\tau
K}\right\|\ \le \ 3\delta_3^2 |E_F|\alpha\tau\,,
\end{equation}
with
\[K \ = \ \left[\begin{array}{cc}
D & \delta_3\,V^* \\
\delta_3\, V &E_F\\
\end{array}\right]\,.\]
To this end, we split $K$ into the diagonal and off diagonal parts:
\[K \ = \ K_1\,+\,K_2: = \ \left[\begin{array}{cc}
D & 0 \\
0 &E_F\\
\end{array}\right]\,+\,\left[\begin{array}{cc}
0 & \delta_3\,V^* \\
\delta_3\, V &0\\
\end{array}\right]\,.\]
Let
\[\Omega(s):=\ e^{i\alpha\tau sK_1}\,e^{-i\alpha\tau sK}\,;\quad
K_2(s):=\ e^{i\alpha\tau sK_1}\,K_2\,e^{-i\alpha\tau sK_1}\,,\]
then $\dot \Omega(s)=-i\alpha\tau K_2(s)\Omega(s)$, and
$e^{-i\alpha\tau K}$ is given by the following Duhamel formula:
\begin{eqnarray}\label{eq:Du1}
e^{-i\alpha\tau K} & = &
e^{-i\alpha\tau K_1}\left\{I\,-\,i\alpha\tau \int_0^1K_2(s)\Omega(s)\,\di
s\right\} \\ &=&
e^{-i\alpha\tau K_1}\left\{I\,-\,i\alpha\tau \int_0^1
K_2(s)\,\di s\right.\nonumber \\
&&\hspace{1.5cm}+\,\left.\left(-i\alpha\tau \right)^2\int_0^1
K_2(s)\,\di s\int_0^s K_2(r)\Omega(r)\,\di
r\right\}\,.\label{eq:Du2}
\end{eqnarray}
Observe now that
\[
\left[K_2(s)\right]_{1,m+1}\ = \ \left[K_2\right]_{1,m+1}\,,\]
since $e^{-i\frac{\tau}{2}sK_1}$ is diagonal with $(1,1)$ entry
equal to $(m+1,m+1)$ entry. In fact, $\left[K_2(s)\right]_{j,m+1}\ =
\ \left[K_2\right]_{j,m+1}$ for all $j$ such that $E^f_j=E^f_1$.
Therefore
\begin{equation}\label{eq:snd}
\left\|P_F\,\int_0^1 K_2(s)\,\di s\,P_{m+1}\right\| \ =\
\delta_2\,|E_F|\,,
\end{equation}
since
\[ P_F\,K_2\,P_{m+1} \ = \ \frac{|E_F|}{\sqrt{1-\delta_3^2}}\,P_F\,P_I\,P_{m+1}\,.\]
To estimate the second term in \eqref{eq:Du2}, we note first that
the following bound holds:
\begin{lemma}\label{lem:kato}
We have
\begin{equation}\label{eq:snd2}
\left\|\int_0^s P_{m+1}\,K_2(r)\,Q_F\,\bar P_F\,\Omega(r)\,\di
r\right\| \ \le\ \frac{8\delta_3}{\tau g_F} \,+\,\frac{2\delta_3^2}{
g_F}\,.
\end{equation}
\end{lemma}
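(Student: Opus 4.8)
The plan is to integrate by parts once in $r$. The operator $A(r):=P_{m+1}K_2(r)Q_F\bar P_F$ is, in the orthonormal basis $\{e_i\}$, a purely oscillatory matrix whose oscillation frequencies are all at least $\alpha\tau g_F$ in absolute value, so an explicit antiderivative of $A(r)$ exists with norm $O\big(\delta_3/(\alpha\tau g_F)\big)$; the leftover integral against $\dot\Omega$ is then harmless, because the $\alpha\tau$ it produces is exactly cancelled by the $1/(\alpha\tau)$ carried by the antiderivative, and this cancellation is what yields the $\tau$--independent term $\delta_3^2/g_F$.

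Concretely, I would first record the structure of $A(r)$. Since $K_1$ is diagonal in $\{e_i\}$ with the $(m+1)$-st diagonal entry equal to $E_F$ and the $j$-th ($j\le m$) entry equal to $E_j^f$, and since $P_{m+1}$ and $Q_F\bar P_F$ commute with $e^{\pm i\alpha\tau rK_1}$, one gets $A(r)=W\,e^{-i\alpha\tau r R}$, where $W:=P_{m+1}K_2\,Q_F\bar P_F$ satisfies $\|W\|\le\|K_2\|=\delta_3\|V\|=\delta_3|E_F|$, and $R:=Q_F\bar P_F\,(K_1-E_F)\,Q_F\bar P_F$ is self-adjoint on $\mathrm{ran}(Q_F\bar P_F)$ with $R\ge E_2^f-E_1^f=g_F>0$ (if $Q_F\bar P_F=0$ the integrand vanishes and there is nothing to prove). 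Hence $A(r)=\tfrac{d}{dr}\mathcal A(r)$ with $\mathcal A(r):=\tfrac{i}{\alpha\tau}\,W R^{-1}e^{-i\alpha\tau r R}$ ($R^{-1}$ taken on the excited subspace), and $\|\mathcal A(r)\|\le\delta_3/(\alpha\tau g_F)$.

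Then I would integrate by parts, keeping $\mathcal A$ on the left throughout:
\[
\int_0^s A(r)\Omega(r)\,\di r \ = \ \mathcal A(s)\Omega(s)-\mathcal A(0)\Omega(0)-\int_0^s \mathcal A(r)\,\dot\Omega(r)\,\di r\,,
\]
and substitute $\dot\Omega(r)=-i\alpha\tau K_2(r)\Omega(r)$. Using $\|\Omega(r)\|=1$ (it is a product of unitaries) and $\|K_2(r)\|=\|K_2\|\le\delta_3$, the last integral is bounded by $\alpha\tau\cdot\tfrac{\delta_3}{\alpha\tau g_F}\cdot\delta_3\cdot s\le\delta_3^2/g_F$, while the two boundary terms contribute at most $2\delta_3/(\alpha\tau g_F)$. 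Since $E_F\in[-1,0)$ forces $\alpha=(1-E_F)^{-1}\ge\tfrac12$, we have $1/(\alpha\tau)\le 2/\tau$, and collecting the estimates gives $\big\|\int_0^s A(r)\Omega(r)\,\di r\big\|\le\tfrac{4\delta_3}{\tau g_F}+\tfrac{\delta_3^2}{g_F}$, which lies within the asserted bound.

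The only step that is not mechanical is the first one: one has to see that sandwiching $K_2(r)$ between $P_{m+1}$ on the left and the excited projection $Q_F\bar P_F$ on the right removes exactly the resonant, zero-frequency channels — the $e_{m+1}\!\to\!e_{m+1}$ block and the ground-to-ground block of $H_F$, where $K_1$ produces no level splitting — and leaves the clean form $W e^{-i\alpha\tau r R}$ with $R\ge g_F$. This is precisely why $K$ was decomposed as $K_1+K_2$ and why the factor $\bar P_F$ was inserted. Everything downstream is routine integration by parts and operator-norm bookkeeping.
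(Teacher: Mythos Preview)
Your proof is correct and follows the same integration-by-parts strategy as the paper: both hinge on producing an operator whose $r$-derivative equals $P_{m+1}K_2(r)Q_F\bar P_F$, then transferring the derivative to $\Omega$. The only difference is in how that antiderivative is built: the paper defines $X$ via the contour integral
\[
X=\frac{1}{2\pi i}\oint_{|z-E_F|=g_F/2} P_{m+1}(K_1-z)^{-1}K_2(K_1-z)^{-1}Q_F\bar P_F\,\di z
\]
so that $[X,K_1]=P_{m+1}K_2Q_F\bar P_F$, whereas you solve this commutator equation by hand as $X=WR^{-1}$, exploiting the explicit diagonal form of $K_1$. These are the same object; your direct estimate $\|WR^{-1}\|\le\delta_3/g_F$ is sharper by a factor of $2$ than the contour bound $\|X\|\le 2\|K_2\|/g_F$, which is why you land at $4\delta_3/(\tau g_F)+\delta_3^2/g_F$ rather than the paper's $8\delta_3/(\tau g_F)+2\delta_3^2/g_F$.
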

This estimate is essentially a content of Lemma 3.3 in \cite{ESr}.
The idea is that, since the spectral supports of $K_1$ for $P_{m+1}$
and $Q_F\bar P_F$ are a distance $ g_F$ apart, the integral over $r$
has a highly oscillating phase of order $\tau g_F$. For
completeness, we prove this lemma below.

\par
Armed with this estimate and using the fact that $K_2(s)$ is off
diagonal, we get
\begin{eqnarray*}
&&\left\|P_F\,\int_0^1 K_2(s)\,\di s\int_0^s K_2(r)\Omega(r)\,\di r\,P_{m+1}\right\| \\&&
= \ \|P_F\,K_2\,P_{m+1}\|\cdot
\left\|P_{m+1}\,\int_0^1\,\di s\int_0^s K_2(r)\Omega(r)\,\di r\,P_{m+1}\right\| \\
&&\le \ |E_F|\,\delta_2\,\cdot \int_0^1\,\di s
\norm{\int_0^s\,P_{m+1}K_2(r)(P_F+Q_F\bar P_F)\Omega(r) P_{m+1}\di r}\\
&&\le \ |E_F|\,\delta_2\,\cdot
\left\{ \int_0^1 \left\|P_{m+1}\,K_2(r)\,P_F\right\|\,\di r\right. \\&&
\left.+\, \max_{s\in [0,1]} \left\|\int_0^s
P_{m+1}\,K_2(r)\,Q_F\,\bar P_F\,\Omega(r)\,\di r\right\|\right\}\,,
\end{eqnarray*}
where we have used $\|\Omega(r)\|=\|P_{m+1}\|=1$ and \eqref{eq:snd}.
Applying  estimates \eqref{eq:snd} and \eqref{eq:snd2}, we bound
\begin{eqnarray}
&&\left\|P_F\ \int_0^1  K_2(s)\,\di
s\int_0^s K_2(r)\Omega(r)\,\di r\ P_{m+1}\right\|
\nonumber \\ &&\hspace{1cm}  \le  \
|E_F|\,\delta_2\,\cdot
\left\{|E_F|\,\delta_2\,+\,\frac{8\delta_3}{\tau g_F}
\,+\,\frac{2\delta_3^2}{ g_F}\right\}\,.\label{eq:offbnd}
\end{eqnarray}
Multiplying \eqref{eq:Du2} by $P_F$ from the left and by $P_{m+1}$
from the right, and using the estimates \eqref{eq:snd} and
\eqref{eq:offbnd}, we establish
\begin{eqnarray}
&& \left\|P_F\,e^{-i\alpha\tau K}\,P_{m+1}\right\|  \ \ge \
\alpha\tau\,\left\|P_F\,\int_0^1 K_2(s)\,\di s\,P_{m+1}\right\| \nonumber \\
&& -\,(\alpha\tau)^2\,\left\|P_F\, \int_0^1 K_2(s)\,\di s\int_0^s
K_2(r)\Omega(r)\,\di r\,P_{m+1}\right\| \nonumber \\
&&\hspace{1cm}= \
\alpha\tau|E_F|\,\delta_2\cdot\left(1\,-\,\alpha\tau\,\cdot
\left\{|E_F|\delta_2\,+\,\frac{8\delta_3}{\tau g_F}
\,+\,\frac{2\delta_3^2}{ g_F}\right\}\right)\,.\label{eq:Ka}
\end{eqnarray}

Combining the estimates in \eqref{eq:dist}, \eqref{eq:K}, and
\eqref{eq:Ka}, the result will follow provided
\begin{eqnarray*} &&\alpha\tau|E_F|\,\delta_2\cdot\left(1\,-\,\alpha\tau\,\cdot
\left\{|E_F|\delta_2\,+\,\frac{8\delta_3}{\tau g_F}
\,+\,\frac{2\delta_3^2}{ g_F}\right\}\right)\\ &&\hspace{1cm}\ge
\,1/5\,+\,\delta_3^2\,+\,\delta_3^4\,+\,3\delta_3^2|E_F|\alpha\tau\,.
\end{eqnarray*}
Note now that for $\tau=O(1/\delta_2)$ the above inequality is
satisfied for values of $\tau$ and $\delta_3$ such that
\[ \alpha\tau|E_F|\,\delta_2\cdot\left(1\,-\,\alpha\tau |E_F|\delta_2\,\right)\,
\ge \,2/9\,,\quad \delta_3/ g_F=O(1/\ln N)\,.\] The result now
follows. \hfill\ep
\begin{proof}[Proof of Lemma \ref{lem:kato}]
Let
\begin{equation}\label{eq:X}
X:=\ \frac{1}{2\pi i}\oint_\Gamma P_{m+1}\,(K_1-zI)^{-1}\,
K_2\,(K_1-zI)^{-1}\,Q_F\,\bar P_F\,\di \,z\,,
\end{equation}
where the contour $\Gamma$ is a circle $\{z\in\C:\ |z-E_F|=
g_F/2\}$. Since
\[\frac{1}{2\pi i}\oint_\Gamma (K_1-zI)^{-1}\,\di \,z \ = \
P_F\,+\,P_{m+1}\,,\]
one can readily check that
\[[X\,,\,K_1]\ = \ P_{m+1}\,\,K_2\,Q_F\,\bar P_F\,.\]
Hence
\begin{eqnarray*}&& \int_0^s P_{m+1}\,K_2(r)\,Q_F\,\bar P_F\,\Omega(r)\,\di r \\ && \hspace{1cm} = \
\frac{-2i}{\tau}\, \int_0^s \frac{d}{\di r}\Big(
e^{-i\frac{\tau}{2}rK_1}\,X
\,e^{i\frac{\tau}{2}rK_1}\Big)\,\Omega(r)\,\di r\,.\end{eqnarray*}
Integrating the right hand side by parts, we obtain
\begin{eqnarray*}
&&\int_0^s P_{m+1}\,K_2(r)\,Q_F\,\bar P_F\,\Omega(r)\,\di r \\ &&= \
-\frac{2i}{\tau}\,\left\{ e^{-i\frac{\tau}{2}rK_1}\,X
\,e^{i\frac{\tau}{2}rK_1}\,\Omega(r)\Big|_0^s\,-\,\int_0^s
e^{-i\frac{\tau}{2}sK_1}\,X \,e^{i\frac{\tau}{2}sK_1}\,\dot
\Omega(r)\,\di r\right\}\\ &&= \ -\frac{2i}{\tau}\,
e^{-i\frac{\tau}{2}rK_1}\,X
\,e^{i\frac{\tau}{2}rK_1}\,\Omega(r)\Big|_0^s\,-\,\int_0^s
e^{-i\frac{\tau}{2}sK_1}\,X \,e^{i\frac{\tau}{2}sK_1}\,K_2(r)
\Omega(r)\,\di r \,.
\end{eqnarray*}
The first term is bounded in norm by $4\frac{\|X\|}{\tau}$, while
the second one is bounded by $\|X\|\cdot\|K_2\|$. It follows from
\eqref{eq:X} that $\|X\|\le \frac{2\|K_2\|}{ g_F}$. On the other
hand, $\|K_2\|=\delta_3 \|V\|=|E_F|\delta_3\le \delta_3$ by
\eqref{eq:HIest}, and the result follows.
\end{proof}
\section{Proof of Theorem \ref{thm:delta}}
The algorithm used in the proof is inspired by the mean ergodic theorem and makes use of the fact that the survival probability
$c_F (t) = \langle\psi_I|e^{itH_F}| \psi_I\rangle$ is directly measurable in AQC framework. We suggest to measure the survival probability for a number a times specified below to estimate the overlap $\delta_2$, and then to count the total running time spent on this subroutine.

Our starting point is a truncated Taylor's expansion for $e^x$:
\[ e^x\ =\ \sum_{k=0}^L \frac{x^k}{k!}\ +\ O\bigg(\frac{|x|^L}{L!}\bigg). \]
Setting $x=pe^{iw}$, and multiplying both sides by $x^{-p}$, we obtain the following relation:
\begin{equation}\label{eq:taylor}
e^{p(\cos w -1 )}e^{ip\sin w}\ =\ e^{-p}\sum_{k=0}^L \frac{p^ke^{iwk}}{k!}\
+\ O\bigg(e^{-p}\frac{p^L}{L!}\bigg).
\end{equation}
If $1-\cos w >g$, then the left hand side of Eq.~\eqref{eq:taylor} is bounded by $e^{-pg}$
and therefore is smaller than $1/N^2$, provided $p=2\ln N/\min(1, g)$. On the other hand, with such choice of $p$, the remainder term in Eq.~\eqref{eq:taylor} is bounded by
$O(1/N^2)$ if $L$ is chosen to be equal to $ep$. Combining these observations, we get
\begin{equation}\label{eq:N2}
e^{-p}\sum_{k=0}^{ep} \frac{p^ke^{ik\omega}}{k!}\,=\,
\begin{cases}
&1+O(1/N^2)\,, \hspace{1cm}\mbox{ if } \omega=0 \\
 &   \ \ O(1/N^2)\,, \hspace{1.4cm}\mbox{ if } 1-\cos\omega>g
\end{cases}\ \ \, ,
\end{equation}
where $p=2\ln N/\min(g,1)$.

Now, using the spectral decomposition of $H_F$,
\[e^{it(H_F-E_F)}=\sum_{i=1}^N P_ie^{it(E^f_i-E_F)}P_i,\]
where $E_i$ is the $i$-th distinct eigenvalue  of $H_F$ and $P_i$ is the projector onto the spectral subspace associated with $E_i$. Hence
\begin{equation}\label{eq:iden'}
e^{-p}\sum_{t=0}^{ep} \frac{p^k}{k!}\,\langle\psi_I|
e^{it(H_F-E_F)}\psi_I\rangle\,=\,(\delta_2)^2+O(1/N^2)\,,
\end{equation}
for $p=2\ln N/\min(1, 1-\cos g_F)$ where we have used Eq.~\eqref{eq:N2}.
The total running time is $\sum_{t=1}^{ep} t=O((\ln N)^2)$.
\hfill\ep

\section{Proof of Theorem \ref{thm:gap}}
The main tool we are going to use is the so called Krein's formula \cite{AG} for the rank $m$ perturbation of the initial Hamiltonian $H_I$. It
gives a characterization of the location of $m$ eigenvalues of the
perturbed matrix that {\it differ} from the spectral values of
$H_I$. Specifically, let $A,B$ be two hermitian matrices, with
$Rank\ B=m$, and let $Q$ be an orthogonal projection onto $Range\
B$. Suppose that $A$ is invertible (that is $0\notin\sigma(A)$). Then Krein's formula tells us that
\be\label{eq:krein}(A+tB)^{-1} \ = \ \left(K^{-1}+tBQ\right)^{-1}\,,\ee
with
\[ K\ :=\ QA^{-1}Q\,,\]
whenever the right hand side of \eqref{eq:krein} exists and where $K^{-1}+tBQ$  is interpreted as acting in the $m$--dimensional space
$Range\ B$.  In other words,  $0$ is an eigenvalue of $A+tB$ if and only if the $m\times m$ matrix $K^{-1}+tBQ$ contains $0$ in its spectrum.

The Krein's formula follows directly from the Schur
complement formula, which says that if $C$ is invertible then
\[QC^{-1}Q \ = \ \left(QCQ\,-\,QC\bar Q\,(\bar QC\bar Q)^{-1}\,\bar Q
CQ\right)^{-1}\,,\]
where $\bar Q:=I-Q$ and the inverses on the right hand side are
understood as acting on the ranges of $\bar Q$ and $Q$,
respectively.

To apply the Krein's formula in our context, we form a one parameter
family
\[H_t:=\ H_I+tH_F\,,\quad t=\frac{s}{1-s}\,,\quad
t\in[0,\infty)\,.\]
It is then follows that for a fixed $t\in(0,\infty)$ the eigenvalues
of $H_t$ that differ from $\sigma(H_I)$ are given by the roots of
the equation
\begin{equation}\label{eq:roots}
\det\left( K^{-1}(E)\,+\,tH_F Q_F\right)\ = \ 0\,,
\end{equation}
where
\[K(E):=Q_F\left(H_I\,-\,E\right)^{-1}Q_F\,.\]
Whenever it is clear from the context that we are working with the
operators on $Range\ Q_F$, we will suppress the $Q_F$ dependence.

To analyze \eqref{eq:roots}, we start with the following simple
observation:
\begin{lemma}\label{lem:posit}
The matrix $K(E)$ can be decomposed as
\begin{equation}\label{eq:decomp}
K(E)\ = \ \hat K(E) \,+\,\frac{\delta_4^2\,D}{E_I-E}\,.
\end{equation}
Here the matrix $D$ is positive semi-definite, and is bounded in
norm by $1$. The matrix $\hat K(E)$ is holomorphic in the half plane
$Re\ E>E_I-g_I/2$ and is positive definite for
$E\in[E_I-g_I/2,E_I+g_I/2]$. Moreover, in this interval we have
bounds
\begin{equation}\label{eq:Kbounds}
\frac{2}{4+g_I}-\delta_4^2\ \le \ \hat K(E)\ \le \
\frac{2}{g_I}\,;\quad \frac{4}{(4+g_I)^2}-\delta_4^2\ \le \
\frac{d\hat K(E)}{dE}\ \le \ \frac{4}{g_I^2}\,.
\end{equation}
\end{lemma}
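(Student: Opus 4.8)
The plan is to read the decomposition \eqref{eq:decomp} off the spectral resolution of $H_I$ and then to extract every inequality in \eqref{eq:Kbounds} from the single elementary fact that, for positive semi-definite operators $A_n$ and scalars $f_n$ with $c_1\le f_n\le c_2$, one has $c_1\sum_n A_n\preceq\sum_n f_nA_n\preceq c_2\sum_n A_n$. First I would write $H_I=E_IP_I+\sum_{n\ge2}E_n^iP_n^i$, so that with $\bar P_I:=I-P_I=\sum_{n\ge2}P_n^i$,
\[(H_I-E)^{-1}\ =\ \frac{P_I}{E_I-E}\ +\ \sum_{n\ge2}\frac{P_n^i}{E_n^i-E}\,.\]
Compressing by $Q_F$ and regarding $K(E)$ as an operator on $Range\ Q_F$, this is exactly \eqref{eq:decomp} with $\hat K(E):=\sum_{n\ge2}(E_n^i-E)^{-1}\,Q_FP_n^iQ_F$ and $D:=\delta_4^{-2}\,Q_FP_IQ_F$ (one may assume $\delta_4>0$; otherwise the last term is absent). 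Here $D\succeq0$ and $\|D\|=1$, since $Q_FP_IQ_F=(P_IQ_F)^*(P_IQ_F)$ is positive semi-definite with $\|Q_FP_IQ_F\|=\|P_IQ_F\|^2=\delta_4^2$. Holomorphy is then immediate: the only singularities of $\hat K$ are simple poles at the $E_n^i$ with $n\ge2$, all of which satisfy $E_n^i\ge E_I+g_I$ and so lie at distance at least $g_I/2$ from $[E_I-g_I/2,E_I+g_I/2]$; hence $\hat K$ continues holomorphically across $E=E_I$ and over a neighbourhood of that interval.

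For the bounds I would fix real $E\in[E_I-g_I/2,E_I+g_I/2]$ and first estimate the scalar weights. Using $E_I=-1$ together with the calibration $\|H_I\|=1$ (so that $E_I+g_I\le E_n^i\le1$ for $n\ge2$) gives $g_I/2\le E_n^i-E\le1-E_I+g_I/2=(4+g_I)/2$, hence $\tfrac{2}{4+g_I}\le(E_n^i-E)^{-1}\le\tfrac{2}{g_I}$ and $\tfrac{4}{(4+g_I)^2}\le(E_n^i-E)^{-2}\le\tfrac{4}{g_I^2}$. Next I would observe that $\sum_{n\ge2}Q_FP_n^iQ_F=Q_F\bar P_IQ_F=Q_F-Q_FP_IQ_F$, which, as an operator on $Range\ Q_F$, lies between $(1-\delta_4^2)Q_F$ and $Q_F$ because $0\preceq Q_FP_IQ_F\preceq\delta_4^2Q_F$. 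Feeding these two facts into the squeezing inequality, applied to $\hat K(E)$ and to $\tfrac{d}{dE}\hat K(E)=\sum_{n\ge2}(E_n^i-E)^{-2}\,Q_FP_n^iQ_F$, and using $\tfrac{2}{4+g_I}(1-\delta_4^2)\ge\tfrac{2}{4+g_I}-\delta_4^2$ and $\tfrac{4}{(4+g_I)^2}(1-\delta_4^2)\ge\tfrac{4}{(4+g_I)^2}-\delta_4^2$ (both valid since $4+g_I\ge2$), yields precisely \eqref{eq:Kbounds}. The \emph{strict} positive-definiteness of $\hat K(E)$ then follows from $\tfrac{2}{4+g_I}>\delta_4^2$, which is guaranteed by the standing hypothesis $g_I>10\delta_4$ (recall $g_I\le2$, so $\tfrac{2}{4+g_I}\ge\tfrac13$, while $\delta_4^2<g_I^2/100\le\tfrac1{25}$).

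I do not anticipate a genuine obstacle; the lemma is essentially bookkeeping around the spectral theorem. The two points that deserve a moment's care are that the strict (as opposed to merely semi-definite) lower bound on $\hat K(E)$ really does consume the smallness of $\delta_4$, and that the numerical constant $\tfrac{2}{4+g_I}$ presupposes the normalisation $E_I=-1$, $\sigma(H_I)\subseteq[-1,1]$ in force throughout the paper — for a general negative $E_I$ one would obtain $\tfrac{2}{\,2(1-E_I)+g_I\,}$ in its place.
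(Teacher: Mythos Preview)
Your proof is correct and follows essentially the same route as the paper: split $(H_I-E)^{-1}$ into its $P_I$ and $\bar P_I$ parts, bound the spectral weights $E_n^i-E$ on the interval $[E_I-g_I/2,E_I+g_I/2]$, and finish via $Q_F\bar P_IQ_F=Q_F-Q_FP_IQ_F$ with $0\preceq Q_FP_IQ_F\preceq\delta_4^2Q_F$; the paper merely phrases the middle step as an operator inequality on $\bar P_I(H_I-E)$ rather than writing out the spectral sum. One minor point on your closing caveat: you do not actually need $E_I=-1$, since $\|H_I\|=1$ already forces $E_n^i\le1$ and $E_I\ge-1$, whence $E_n^i-E\le1-(E_I-g_I/2)\le2+g_I/2=(4+g_I)/2$, which is precisely the paper's justification.
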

\begin{proof}
We decompose
\begin{eqnarray*}K(E) & = & Q_F\left(H_I\,-\,E\right)^{-1}Q_F \\ &=&
\ Q_F\bar P_I\left(H_I\,-\,E\right)^{-1}Q_F\,+\,Q_F
P_I\left(H_I\,-\,E\right)^{-1}Q_F\,.
\end{eqnarray*}
The first contribution will correspond to $\hat K(E)$ in
\eqref{eq:decomp}, and the second one to its counterpart in
\eqref{eq:decomp}. Note now that for $E\in[E_I-g_I/2,E_I+g_I/2]$ we
have
\[\frac{g_I}{2}\,\bar P_I \ \le \ \bar P_I(H_I-E) \ \le\
\left(2\,+\,\frac{g_I}{2}\right)\,\bar P_I\,,\]
where the the upper bound is a consequence of $\|H_I\|=1$. Hence we
obtain
\[\frac{2}{4+g_I}\,Q_F\bar P_IQ_F\ \le \ Q_F\bar
P_I(H_I-E)^{-1}Q_F \ \le \ \frac{2}{g_I}\,Q_F\bar P_IQ_F\,.\]
Therefore, the first bound in \eqref{eq:Kbounds} follow now from
\[Q_F\bar P_IQ_F \ = \ Q_F\,-\,Q_F P_IQ_F\]
and
\[ 0\ \le \ Q_F P_IQ_F\ \le \ \delta_4^2\, Q_F\,.\]
To obtain the second bound in \eqref{eq:Kbounds} we note that
\[\frac{d}{dE}\left(H_I\,-\,E\right)^{-1} \ = \
\left(H_I\,-\,E\right)^{-2}\]
for $E\notin\sigma(H_I)$, and then proceed as above.
\end{proof}
In applications to the AQC the parameter $\delta_4$ is typically
extremely small: $\delta_4^2 = O(1/N)$. Hence the second contribution
in \eqref{eq:decomp} is small provided $|E-E_I|\gg \delta_4$.
Therefore for value of $E$ in such intervals, we can first find the
roots $\hat E_i(t)$ of
\begin{equation}\label{eq:rootsmod}
\det\left( \hat K^{-1}(E)\,+\,tH_F\right)\ = \ 0\,,
\end{equation}
and then estimate $|\hat E_i(t)- E_i(t)|$, where $E_i(t)$ are
corresponding roots of \eqref{eq:roots}. As we will see, the level
crossings or the avoided level crossings for $H_t$ occur for values
of $t$ such that a pair of eigenvalues $E_k(t), \ E_l(t)$ is close
to $E_I$. To find these values of $t$ in the first approximation, we
fix the value $E=E_I$ in \eqref{eq:rootsmod} and solve it for $t$.
We have
\begin{lemma}
The equation
\begin{equation}\label{eq:rootsmod'}
\det\left( \hat K^{-1}(E_I)\,+\,tH_F\right)\ = \ 0\,,
\end{equation}
has exactly $m_+$ roots $\{t_j\}_{j=1}^{m_+}$ on $(0,\infty)$, where
$m_+$ is a number of negative eigenvalues of $H_F$.
\end{lemma}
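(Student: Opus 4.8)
The plan is to analyze the generalized eigenvalue problem $\det(\hat K^{-1}(E_I) + t H_F) = 0$ by diagonalizing the pair $(\hat K^{-1}(E_I), H_F)$ simultaneously and counting sign patterns. First I would observe that by Lemma \ref{lem:posit}, $\hat K(E_I)$ is positive definite on $Range\, Q_F$ (since $\tfrac{2}{4+g_I} - \delta_4^2 > 0$ under the hypothesis $g_I > 10\delta_4$, which forces $\delta_4$ small); hence $\hat K^{-1}(E_I)$ is positive definite as well. Write $A := \hat K^{-1}(E_I)$, a positive definite $m \times m$ matrix (all operators understood as acting on $Range\, Q_F$, which has dimension $m$ by Assumption \ref{assump'}), and $B := H_F$ restricted to $Range\, Q_F$, which is invertible by construction. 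The equation becomes $\det(A + tB) = 0$, i.e. $-1/t$ is an eigenvalue of $A^{-1}B$, equivalently $t$ is such that $A + tB$ is singular.

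The key step is a congruence / Sylvester's law of inertia argument. Since $A > 0$, write $A = R^*R$ for invertible $R$; then $A + tB$ is singular iff $I + t\,(R^{-1})^* B R^{-1}$ is singular, i.e. iff $-1/t \in \sigma(\tilde B)$ where $\tilde B := (R^{-1})^* B R^{-1}$ is Hermitian and, by Sylvester's law of inertia, has the same number of negative eigenvalues as $B = H_F|_{Range\, Q_F}$, namely $m_+$ (note $H_F$ restricted to its range has exactly $m_+$ negative and $m_- = m - m_+$ positive eigenvalues, none zero). The positive roots $t \in (0,\infty)$ of $\det(A+tB)=0$ correspond exactly to the negative eigenvalues of $\tilde B$ via $t = -1/\mu$, $\mu < 0$. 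Hence there are exactly $m_+$ such roots, counted with multiplicity. I would then note that for the statement as phrased (distinct roots $\{t_j\}_{j=1}^{m_+}$) one either counts with multiplicity or invokes genericity; in any case the count is $m_+$.

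I expect the main obstacle to be a bookkeeping one rather than a conceptual one: making sure all operators are consistently interpreted as acting on the $m$-dimensional space $Range\, Q_F$ (so that $H_F$ there is genuinely invertible and the inertia count gives precisely $m_+$), and confirming that the hypothesis $g_I > 10\delta_4$ is indeed enough to guarantee $\hat K(E_I) > 0$ via the first bound in \eqref{eq:Kbounds}. A secondary point to handle carefully is the behavior as $t \to 0^+$ and $t \to \infty$: as $t \to 0^+$, $A + tB \to A > 0$ so no root accumulates at $0$; as $t \to \infty$, $\tfrac{1}{t}A + B \to B$ which is invertible, so no root escapes to $\infty$. Thus all $m_+$ roots lie in the open interval $(0,\infty)$, as claimed. $\hfill\ep$
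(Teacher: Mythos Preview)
Your proposal is correct and follows essentially the same route as the paper: the paper sets $A:=\hat K(E_I)$, factors $A^{-1}+tH_F=tA^{-1/2}\bigl(t^{-1}+A^{1/2}H_FA^{1/2}\bigr)A^{-1/2}$, and concludes via Sylvester's law of inertia that the positive roots $t_j$ are in bijection with the negative eigenvalues of $A^{1/2}H_FA^{1/2}$, hence of $H_F$. Your Cholesky-type factorization $A=R^*R$ (with $A=\hat K^{-1}(E_I)$) and the congruence $\tilde B=(R^{-1})^*H_FR^{-1}$ amount to the same manipulation with $R^{-1}$ playing the role of $\hat K(E_I)^{1/2}$; the extra remarks on endpoint behavior and multiplicity are harmless additions.
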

\begin{proof}
Let $A:=\hat K(E_I)$, then it follows from previous lemma that $0<
 A$. Hence
\[\left(A^{-1}\,+\,tH_F\right) \ = \
tA^{-1/2}\left(t^{-1}\,+\,A^{1/2}H_FA^{1/2}\right)A^{-1/2}\,.\]
The right hand side is not invertible for values $\{t_j\}$ such that
\[
-t_j^{-1}\,\in\,\sigma\left(A^{1/2}H_FA^{1/2}\right)\,,
\]
and the result follows now from  Sylvester's law of inertia
\cite{HJ}.
\end{proof}
We are now in position to estimate the size of the gap $g$ from
above. Namely, we consider the gaps $g_j$ for $H_t$ for $t=t_j$.
Since $t=\frac{s}{1-s}$ and $H(s)=(1-s)H_t$, we obtain $g\le
\frac{g_j}{1+t_j}\le g_j$. Let
\[\beta:= \
2\left(\frac{\delta_4^2}{\frac{4}{(4+g_I)^2}-\delta_4^2}\right)^{1/2}\,.\]
To get a bound on $g_j$ we show that \eqref{eq:roots} has roots in
the intervals $[E_I-\beta,E_I)$ and $(E_I,E_I+\beta]$, at $t=t_j$.
We then infer that $g_j\le 2\beta$, from which the upper bound in
\eqref{eq:gap} follows since $\beta<5\delta_4$. Observe first that by
condition of the Theorem \ref{thm:gap},
$\sigma(H_I)\cap[E_I-\beta,E_I)=\sigma(H_I)\cap(E_I,E_I+\beta]=\emptyset$ (where the latter property follows from the bound $\beta<5\delta_4<g_I/2$),  hence we are in position
to use Lemma \ref{lem:posit}. We only show that for the first
interval, the proof is analogous for the second one.

To this end, we will denote by $\sgn(A)$ the signature of the matrix
$A$. We observe that since $\frac{\delta_4^2\,D}{E_I-E}$ in
\eqref{eq:decomp} is positive semidefinite and monotone increasing
for the values of $E$ in $[E_I-\beta,E_I)$,  we have
\begin{equation}\label{eq:sgn1}
\sgn(K^{-1}(E_I-0)\,+\,t_jH_F) \ \le \ \sgn(\hat
K^{-1}(E_I)\,+\,t_jH_F)\,.
\end{equation}
On the other hand, we have
\begin{equation}\label{eq:rootsE}
\hat K(E_I) \, -\, \hat K(E_I-\beta)\ = \ \int_{E_I-\beta}^{E_I}\hat
K'(E)\,\di E \ \ge \ \beta
\left(\frac{4}{(4+g_I)^2}-\delta_4^2\right)\,,
\end{equation}
where in the last step we have used \eqref{eq:Kbounds}. Hence
\begin{eqnarray}K(E_I-\beta) & = & \hat
K(E_I-\beta)\,+\,\frac{\delta_4^2\,D}{\beta} \nonumber \\ & \le & \hat
K(E_I)\,-\, \left(\beta
\left(\frac{4}{(4+g_I)^2}-\delta_4^2\right)\,I\,-\,
\frac{\delta_4^2\,D}{\beta}\right)\\  &\le& \hat
K(E_I)\,-\,\frac{\delta_4^2}{\beta}\,I \,<\, \hat K(E_I)\nonumber \,,
\end{eqnarray}
with a choice of $\beta$ as above, and where we have used $\|D\|\le 1$. We
infer
\[K^{-1}(E_I-\beta)\,+\,t_jH_F\ > \ \hat
K^{-1}(E_I)\,+\,t_jH_F\,,\]
and since the matrix $\hat K^{-1}(E_I)\,+\,t_jH_F$ has zero
eigenvalue by construction, we obtain
\begin{equation}\label{eq:sgn2}
\sgn(K^{-1}(E_I-\beta)\,+\,t_jH_F) \ < \ \sgn(\hat
K^{-1}(E_I)\,+\,t_jH_F)\,.
\end{equation}
Combining \eqref{eq:sgn1} and \eqref{eq:sgn2} together, we get
\begin{equation}\label{eq:sgn3}
\sgn(K^{-1}(E_I-\beta)\,+\,t_jH_F) \ < \
\sgn(K^{-1}(E_I+0)\,+\,t_jH_F)\,.
\end{equation}
But the family $K^{-1}(E)\,+\,t_jH_F$ is continuous on
$[E_I-\beta,E_I)$, hence there should be some value of $E$ in this
interval for which $K^{-1}(E)\,+\,t_jH_F$ has the eigenvalue $0$.
\hfill\ep

\section{Proof of Theorem \ref{thm:at5}}
Let us remind the reader that in the context of this assertion $E_I=-1$. As in the proof of theorem \ref{thm:at1},
\[\phi_\tau(s):= \
e^{if_\tau(s)}\psi_\tau(s)\,,\quad f_\tau(s)\ =\ -\tau\,\int_0^s
(1-f(r))\,\di r\,.\]
and
\[\hat H(s)\ = \ (1-f(s))\,(H_I+1)\,+\,f(s)\,H_F\,.\]
\par
Let us introduce the auxiliary matrix
\[
B(s)=\left(f(s)H_F+1-f(s)+\epsilon i\right))^{-1}\,,
\]
and let $\phi(s) = \psi_I-f(s)H_FB(s)\, \psi_I$, where $\epsilon$ is
a small parameter to be chosen later. Omitting the $s$ dependence,
we have
\begin{equation}\label{eq:hatHp}
\hat H\phi= -f(1-f)H_IH_F B \,\psi_I -i\epsilon fH_FB \,\psi_I\,.
\end{equation}
That means that away from the $m$ values of $s$ for which $B^{-1}(s)$ has
zero eigenvalue, $\|\hat H\phi\|$ is very small, since
$\|H_IH_F\psi_I\|\le\delta^2$.  Note now that by fundamental theorem of calculus we have
\begin{equation}\label{eq:ftc}
\langle\phi(1)|\phi_\tau(1)\rangle \ = \
\langle\phi(0)|\phi_\tau(0)\rangle\,+\,\int_0^1\frac{d}{ds}
\langle\phi(s)|\phi_\tau(s)\rangle\,ds\,,
\end{equation}
where $\phi_\tau(s)$ is defined in \eqref{eq:phi'}. But
$\langle\phi(0)|\phi_\tau(0)\rangle=1$ and
\begin{eqnarray*}
\left|\langle\phi(1)|\phi_\tau(1)\rangle\right|& =&
\left|\langle\psi_I|\phi_\tau(1)\rangle \,-\,
\langle\psi_I|\frac{H_F}{H_F-\epsilon\,i}| \phi_\tau(1)\rangle\right|\\
&&\hspace{-2cm}\le\
\left|\langle\psi_I|\phi_\tau(1)\rangle\right|\,+\,\|Q_F\psi_I\|\
= \ \left|\langle\psi_I|\phi_\tau(1)\rangle\right|\,+\,\delta\,.
\end{eqnarray*}
Substitution into Eq.~\eqref{eq:ftc} gives
\[1\,-\,\left|\langle\psi_I|\phi_\tau(1)\rangle\right|\ \le \
\left|\int_0^1\frac{d}{ds}
\langle\phi(s)|\phi_\tau(s)\rangle\,ds\right| \,+\,\delta\,.\]
Hence Eq.~\eqref{eq:mintime''} will follow if
\begin{equation}\label{eq:nbn}
\left|\int_0^1\frac{d}{ds}
\langle\phi(s)|\phi_\tau(s)\rangle\,ds\right| \ <\
1\ -\ \frac{2\sqrt{6}}{5}\,-\,3\delta\,.
\end{equation}
To establish the above bound, we note first that
\be\label{eq:derph}\frac{d}{ds}
\langle\phi(s)|\phi_\tau(s)\rangle \ = \
\langle\dot\phi(s)|\phi_\tau(s)\rangle \,-\,i\tau\langle\phi(s)|\hat
H(s)|\phi_\tau(s)\rangle\,.\ee
We bound the first term on the right hand side by $\|\dot\phi\|$ and
the second one by $\tau\|\hat H\phi\|$.  On the other hand,
suppressing the $s$-dependence, we have
\[\dot \phi \ = \ -\dot f H_FB\psi_I\,-\,f\dot f B(H_F-1)B H_F\psi_I\,,\]
hence
\[\|\dot\phi\| \ \le\  \dot f\,\|B\| \,\|H_F\psi_I\|\,+\,2\dot f\,\| B\|^2\,\|H_F\psi_I\|\,,\]
where we have used $|f|\le 1$ and $\dot f\ge0$.  Let  $\dist(S,z)$ be an Euclidean distance from the set $S$  to the
point $z$ in $\C$, and let  $\sigma(H)$ stand for the spectrum of $H$. Then we can estimate the right hand side further as
\be\label{eq:derph'}\|\dot\phi\| \ \le\ \frac{\dot f\delta}{\Delta_\epsilon}\,+\,
\frac{2\dot f\delta}{\left(\Delta_\epsilon\right)^2}\,,\ee
with
\[\Delta_\epsilon(s):=\dist(f(s)\sigma(H_F)\,,\,-1+f(s)+\epsilon\,i)\,\]
and where we have used  $-Q_F\le H_F\le Q_F$.  Taking the norm from the both sides of \eqref{eq:hatHp} we get
that
\be\label{eq:derph'a} \|\hat H\phi\| \ \le\ \|H_I Q_F\|\,\|f H_F B\|\,\|Q_F\psi_I\|\,+\,\epsilon\,\|f H_F B\|\,\|Q_F\psi_I\| \,,
\ee
where we have used $H_F Q_F= H_F$ and  $\|H_I Q_F\|\le \|Q_I Q_F\|$, with the later relation following from
\[\|H_I Q_F\| \ = \ \|H_I Q_IQ_F\| \ \le \ \|H_I\|\,\|Q_IQ_F\|\,.\]
To estimate $\|f H_F B\| $ we consider three cases:
\begin{enumerate}
\item{$s\in[0,a]$:}  On this interval we can estimate
\be\label{eq:1/3}
\|f(s) H_F B(s)\| \ \le \ \max_{s\in[0,a]}\frac{1}{\Delta_\epsilon(s)} \ \le \ 3\,,\ee
where we have used $\sigma(H_F)\subset [-1,1]$.
\item{$s\in J$:} In this case, we  bound
\be\label{eq:J}\|f(s) H_F B(s)\| \ \le \ \frac{1}{\Delta_\epsilon(s)} \ \le \ \frac{\dot f(s)}{\kappa \Delta_\epsilon(s)}\ee
using theorem's hypothesis.
\item{$s\in[b,1]$:} Here we estimate
\be\label{eq:a}\|f(s) H_F B(s)\| \ = \ \left\|f(s) H_F \left(f(s)H_F+1-f(s)+\epsilon i\right))^{-1}\right\| \ \le \ 1\,+\,\frac{(1-f(s))+\epsilon}{\Delta_\epsilon(s)}\ \le \ 2\,+\,\frac{\dot f(s)}{\Delta_\epsilon(s)}\,,\ee
where in the last step we have used \eqref{eq:concave}.

\end{enumerate}
Plugging \eqref{eq:1/3} -- \eqref{eq:a} into \eqref{eq:derph'a}, we obtain
\be\label{eq:derph''} \|\hat H\phi\|  \ \le \
(\delta^2+\epsilon\delta)\left(3\,+\,\frac{\dot f(s)}{\kappa \Delta_\epsilon}\right)\,.
\ee
Using \eqref{eq:derph'} and  \eqref{eq:derph''} to bound the right hand side of  \eqref{eq:derph}, we get
\be\label{eq:ketbra}\left|\frac{d}{ds}
\langle\phi|\phi_\tau\rangle\right| \ \le \ \frac{\dot f\delta}{\Delta_\epsilon}\,+\,
\frac{2\dot f\delta}{\left(\Delta_\epsilon\right)^2}\,+\, \tau(\delta^2+\epsilon\delta)\left(3\,+\,\frac{\dot f(s)}{\kappa \Delta_\epsilon}\right)\,.\ee
In what follows we will use
\begin{lemma}\label{lem:suppl}
We have bounds
\be
\int_0^1\frac{\dot f ds}{\Delta_\epsilon}\ \le \
-2(m+1)\ln\epsilon\,;\quad  \int_0^1\frac{\dot f
ds}{\left(\Delta_\epsilon\right)^2}\ \le \
\frac{2(m+1)}{\epsilon}\,.\label{eq:intbnds}
\ee
\end{lemma}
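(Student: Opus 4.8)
The plan is to reduce both estimates in \eqref{eq:intbnds} to one--dimensional integrals via the monotone change of variable $u=f(s)$, and then to bound those integrals by summing over the (at most $m+1$) distinct eigenvalues of $H_F$.

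First I would rewrite $\Delta_\epsilon$. For $u=f(s)\in[0,1]$ the set $f(s)\sigma(H_F)$ is the set of real numbers $\{uE:\ E\in\sigma(H_F)\}$, while the reference point $-1+f(s)+\epsilon i$ has imaginary part $\epsilon$; hence
\[
\Delta_\epsilon(s)^2 \ = \ \min_{E\in\sigma(H_F)}\Big[\big(u(E-1)+1\big)^2+\epsilon^2\Big]\,=:\,\min_{E\in\sigma(H_F)} a_E(u)\,.
\]
Since $\mathrm{Rank}\,H_F=m$ and $m\ll N$, the matrix $H_F$ has the eigenvalue $0$ together with at most $m$ nonzero eigenvalues, so $\sigma(H_F)$ has at most $m+1$ elements; moreover $\sigma(H_F)\subset[-1,1]$ because $\|H_F\|=1$. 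As $\Delta_\epsilon^{-1}=\max_E a_E(u)^{-1/2}\le\sum_E a_E(u)^{-1/2}$ and $\Delta_\epsilon^{-2}=\max_E a_E(u)^{-1}\le\sum_E a_E(u)^{-1}$, and since $\dot f\,ds=du$, the two integrals in \eqref{eq:intbnds} are dominated by $\sum_{E\in\sigma(H_F)}I_E$ and $\sum_{E\in\sigma(H_F)}J_E$ respectively, where $I_E:=\int_0^1 a_E(u)^{-1/2}\,du$ and $J_E:=\int_0^1 a_E(u)^{-1}\,du$. It therefore suffices to bound $I_E=O(|\ln\epsilon|)$ and $J_E=O(1/\epsilon)$ uniformly in $E\in[-1,1]$, and then multiply by $m+1$.

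To bound $I_E,J_E$, note that for fixed $E\ne1$ the affine function $g_E(u)=u(E-1)+1$ maps $[0,1]$ onto the segment between $g_E(0)=1$ and $g_E(1)=E$, so the substitution $t=g_E(u)$ gives $I_E=(1-E)^{-1}\int_E^1(t^2+\epsilon^2)^{-1/2}\,dt$ and $J_E=(1-E)^{-1}\int_E^1(t^2+\epsilon^2)^{-1}\,dt$, with primitives $\ln\!\big(t+\sqrt{t^2+\epsilon^2}\big)$ and $\epsilon^{-1}\arctan(t/\epsilon)$; over $t\in[-1,1]$ these are bounded by $2\ln(2/\epsilon)$ and by $\pi/\epsilon$ respectively. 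This settles the case $E\le1/2$, where $(1-E)^{-1}\le2$. For $E\ge1/2$ the Jacobian $(1-E)^{-1}$ blows up as $E\to1$, but there I would instead use the direct pointwise bound $g_E(u)=1-u(1-E)\ge E\ge1/2$, valid for all $u\in[0,1]$, which gives $I_E\le 1/E\le2$ and $J_E\le 1/E^2\le4$. Combining the two regimes yields the uniform estimates $I_E=O(|\ln\epsilon|)$ and $J_E=O(1/\epsilon)$, and summing over the $\le m+1$ eigenvalues gives \eqref{eq:intbnds}. (A slightly slicker route to the second bound is to write $a_E(u)^{-1}=\epsilon^{-1}\,\Im\,(u(E-1)+1-\epsilon i)^{-1}$ and integrate the resolvent explicitly in $u$; this makes the cancellation near $E=1$ transparent and also shows directly that $J_E\le1/E$ for $E\in[\epsilon,1]$.)

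The only genuine obstacle is the uniform control near $E=1$: the change of variable $t=g_E(u)$ degenerates there, so one must exploit that the image segment $[E,1]$ shrinks at exactly the rate the Jacobian blows up — equivalently, that $g_E$ is bounded away from $0$ once $E$ is bounded away from $0$ — and switch to the direct bound on that range. Everything else is the elementary evaluation of $\int(t^2+\epsilon^2)^{-1/2}dt$ and $\int(t^2+\epsilon^2)^{-1}dt$ together with the bookkeeping $|\sigma(H_F)|\le m+1$.
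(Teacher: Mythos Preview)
Your approach is essentially the same as the paper's: both replace the $\max$ in $\Delta_\epsilon^{-1}$ by a sum over the at most $m+1$ eigenvalues of $H_F$, then use the monotone change of variable $u=f(s)$ to reduce to the elementary integrals $\int_0^1\big((u(E-1)+1)^2+\epsilon^2\big)^{-1/2}\,du$ and its squared analogue. The only difference is in bounding the resulting one--dimensional integral uniformly in $E\in[-1,1]$: the paper evaluates it in closed form as $(E-1)^{-1}\ln\!\big((E+\sqrt{E^2+\epsilon^2})/(1+\sqrt{1+\epsilon^2})\big)$ and argues it is monotone in $E$ (hence maximized at $E=-1$), whereas you split into $E\le1/2$ and $E\ge1/2$; your case split is perfectly sound and arguably more robust, but be aware that it yields slightly larger numerical constants (roughly $4(m+1)\ln(2/\epsilon)$ and $2\pi(m+1)/\epsilon$) than those literally stated in \eqref{eq:intbnds} --- harmless for the application, where only the order of magnitude matters.
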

Integrating both sides of \eqref{eq:derph} over $s$ and using \eqref{eq:ketbra} and \eqref{eq:intbnds}, we arrive at
\[
\left|\int_0^1\frac{d}{ds}
\langle\phi(s)|\phi_\tau(s)\rangle\,ds\right| \ \le\
3\tau(\delta^2+\epsilon\delta)\,+\,2(m+1)\delta\left(-\ln\epsilon\left(1+\frac{\tau\delta}{\kappa}+
\frac{\tau\epsilon}{\kappa}\right)+\frac{2}{\epsilon}\right)\,.
\]
Hence the required bound in Eq.~\eqref{eq:nbn} follows with the
choice $\epsilon=10^{3}(m+1)\delta$, provided $\tau\le
-\frac{C\kappa}{\epsilon^2\ln\epsilon}$ where $C$ is some generic constant.
 \hfill\ep
\begin{proof}[Proof of Lemma \ref{lem:suppl}]
 We derive the first bound, the second bound can be shown analogously. To this end, we observe that
\begin{eqnarray*}
\frac{\dot f }{\Delta_\epsilon} & = & \max_{E_n\in\sigma(H_F)}
\frac{\dot f }{|fE_n+1-f-\epsilon\,i|} \\ 
& < & \sum_{E_n\in\sigma(H_F)}\,\frac{\dot f }{|f(E_n-1)+1-\epsilon\,i|}\\ 
\end{eqnarray*}
 It follows that
\begin{eqnarray*}
\int^1_0 \frac{\dot f ds}{\Delta_\epsilon} & < & \sum_{E_n\in\sigma(H_F)}
\int^1_0 \frac{\dot f ds}{|fE_n+1-f-\epsilon\,i|} \\
& \le & (m+1)\max_{E\in[-1,1]}\int^1_0 \frac{\dot f ds }{|fE+1-f-\epsilon\,i|}\,,
\end{eqnarray*}
where $m=Rank \ H_F$. But
\begin{eqnarray*}&&\int_0^1\frac{d f }{\sqrt{\left(f(E-1)+1\right)^2+\epsilon^2}} \\ && \hspace{1cm}= \ \left.\frac{1}{E-1}\,\ln\left((E-1)f+1+\sqrt{((E-1)f+1)^2+\epsilon^2}\right)\right|_0^1\\ && \hspace{1cm}= \
\frac{1}{E-1}\,\ln\left(\frac{E+\sqrt{E^2+\epsilon^2}}{1+\sqrt{1+\epsilon^2}}\right)\,. \end{eqnarray*}
One can check by taking the derivative that the expression on the right hand side is monotonically decreasing for all $E\in[-1,1]$. Since it is also positive and continuous, this term achieves  its maximum  at $E=-1$, with the value
\[-\frac{1}{2}\ln\left(\frac{\sqrt{1+\epsilon^2}-1}{1+\sqrt{1+\epsilon^2}}\right) \ \le \ -2\,\ln\epsilon\]
for $\epsilon$ small enough.  Hence the first bound in \eqref{eq:intbnds} follows.
\end{proof}

\section{Conclusions}
In this work we derived a number of rigorous results concerning Hamiltonian-based quantum search problems that satisfy Assumption \ref{assump'}. Our results include in particular upper and lower bounds on the amount of time needed to perform a general Hamiltonian-based quantum search, a lower bound on the evolution time needed to perform a search that is valid in the presence of control error and a generic upper bound on the minimum eigenvalue gap for evolutions.

The lower bound on the evolution time is to our knowledge the tightest for
AQC type problems. It matches exactly the results established in the framework of  the quantum circuit algorithm, when  applied to Grover's search problem.

We then construct a specific Hamiltonian-based algorithm with the runtime $\tau=O(\sqrt{N/m})$ which nearly recovers the lower bound, and thus close to be optimal,  in the case of Grover's search where the final Hamiltonian is a  rank $m$  projection. We augment this construction with the Hamiltonian-based quantum counting subroutine,  which allows us to compute one input parameter $\delta_2$ to the algorithm. Since the algorithm is robust in a sense that we can allow the running time to vary in the large time interval, it is not very sensitive to the  ground state energy $E_F$. As a result, just an approximate knowledge of $E_F$ is needed to ensure that the algorithm works.

While our methods do not hinge on the knowledge of the gap structure of the underlying interpolating Hamiltonian $H(s)$, we  establish an upper bound on the size of the first spectral gap for this family of matrices.  For the final Hamiltonian of the Grover's type, {\it i.e.} $H_F$ is a  rank $m$  projection, the smallest value of the gap cannot exceed $O(\sqrt{m/N})$.

Lastly, we address the question of the the evolution time for search-Hamiltonians that are also error robust. Namely, we obtain the lower bound on the running time when the velocity $\dot f$ is greater than
a fixed control parameter $\kappa$ during the evolution, excluding the vicinities of the the endpoints $s=0,1$ where it is allowed to be small. We show that the necessary control accuracy requirement should be at least $O(\ln N/\sqrt{N})$ for the algorithm to succeed. In particular, this result implies that starting slow / finishing slow strategy by itself is not sufficient to make AQC better than the classical computer.

For a general form of $H_F$ for which our Assumption  \ref{assump'} is invalid no general lower bounds on the runtime can be obtained, as the examples constructed in  \cite{DMV} show. It will be very interesting to establish a "typical" lower and upper bounds for the random instances of NP-complete problems,  discussed in \cite{AKR}.

\begin{acknowledgments}  We are grateful to the referee for valuable comments and suggestions. This work was supported in part by NSF
grant DMS--0907165.
\end{acknowledgments}
%
%
\section*{References}
%
%

\end{document}